\documentclass[11pt]{article}

\usepackage[english]{babel}
\usepackage[margin=1in]{geometry}

\usepackage{natbib}
\usepackage{graphicx}
\usepackage{amsfonts}
\usepackage{amssymb}
\usepackage{amsmath}
\usepackage[colorlinks=true, allcolors=red]{hyperref}
\usepackage{mathtools} 
\usepackage{mathrsfs}
\usepackage{booktabs}
\usepackage{multirow}
\usepackage{float} 
\usepackage{enumitem}
\usepackage{caption}
\usepackage{subcaption}
\usepackage{algorithm}
\usepackage{algpseudocode}
\usepackage{amsthm}
\usepackage{titlesec}
\usepackage{orcidlink}
\usepackage{stmaryrd}

\newtheorem{theorem}{Theorem}
\newtheorem{proposition}{Proposition}
\newtheorem{example}{Example}
\newtheorem{illustration}{Illustration}
\newtheorem{lemma}{Lemma}
\newtheorem{remark}{Remark}
\newtheorem{definition}{Definition}
\newtheorem{assumption}{Assumption}
\newtheorem{corollary}{Corollary}
\newcommand{\indep}{\perp\!\!\!\perp}
\newcommand{\symdif}{\,\triangle\,}

\usepackage{newtxtext}
\usepackage{newtxmath}

\title{A Post-Processing Conformal Prediction Approach for Conditional Coverage via Pivotal Scores}

\author{
\Large F\'elix Laplante\textsuperscript{\large \orcidlink{0009-0001-7839-1502}} \\[0.5em]
\small Universit\'e Paris-Saclay, CNRS, Univ \'Evry, \\
\small Laboratoire de Math\'ematiques et Mod\'elisation d'\'Evry, \\
\small 91037, \'Evry-Courcouronnes, France. \\
\small \href{mailto:felixlaplante.research@gmail.com}{felixlaplante.research@gmail.com}
}

\begin{document}
\maketitle

\begin{abstract}
While Conformal Prediction (CP) has proven to be a powerful framework for uncertainty quantification, guaranteeing conditional coverage remains a central challenge. Although finite-sample, distribution-free conditional validity is known to be impossible without structural assumptions, we show that for i.i.d. data, it is fundamentally equivalent to constructing a nonconformity score whose distribution is independent of the features. This theoretical characterization motivates PIT-CP, a new post-processing correction that maps any base nonconformity score to an approximately invariant one while preserving its geometry, interpretability, and marginal coverage. This perspective is particularly appealing in practice, since it may be neither economical nor time-effective to retrain a full generative model when a strong prediction-driven model already provides highly accurate point estimates. Our procedure reduces the problem to one-dimensional conditional density estimation on the induced score, rather than full conditional density estimation on the original outcome space. We show how to estimate this transform in practice and derive bounds on the conditional coverage gap, alongside volumetric and symmetric-difference bounds. We present known minimax-optimal conditional estimation techniques while also motivating the use of modern conditional density estimators, including Mixture Density Networks and Conditional Normalizing Flows. Finally, we empirically demonstrate on various datasets that our PIT-CP procedure matches or outperforms many state-of-the-art conformal prediction strategies with minimal effort and computational cost.
\end{abstract}

\medskip

\noindent\textbf{Keywords:} conformal prediction, uncertainty quantification, conditional coverage, density estimation

\section*{Introduction}

Uncertainty quantification is a critical requirement for the safe and reliable adoption of machine learning models in high-stakes domains, especially as black-box deep neural networks increasingly become standard practice. Over the past two decades, conformal prediction \citep{vovk2005algorithmic} has emerged as a distribution-agnostic framework capable of producing prediction regions with finite-sample marginal coverage guarantees under very mild conditions, regardless of model misspecification. However, marginal validity controls the error rate only on average over the feature distribution, whereas many applications require \emph{conditional} coverage, namely validity for a specific individual with features $X = x$.

Unfortunately, finite-sample nontrivial conditional coverage is impossible in a fully distribution-free setting when the feature space $\mathcal{X}$ is infinite. Foundational impossibility results \citep{vovk2012conditional, lei2014distribution, barber2021limits} show that without structural assumptions on the data-generating process, such guarantees can only be obtained through trivial prediction regions in the outcome space $\mathcal{Y}$. To mitigate this limitation, existing work has considered binning or stratification strategies \citep{vovk2005algorithmic, papadopoulos2002inductive}, locally adaptive conformal methods \citep{papadopoulos2008normalized, lei2018distribution, papadopoulos2011reliable, han2022split, hore2023conformal, guan2023localized, gibbs2025conformal}, and Conformalized Quantile Regression (CQR) \citep{romano2019conformalized}. While these approaches improve local adaptivity, they can suffer from discretization effects, the curse of dimensionality, or too restrictive geometric shapes.

These difficulties become even more visible in multidimensional and multimodal outcome spaces, where scalar reductions such as the magnitude of a residual often lead to overly large hyperrectangular or elliptical regions. To address this, recent work has incorporated optimal transport tools \citep{thurin2025optimal} or generative modeling and density estimation into conformal prediction. For example, Conformal Prediction Region via Normalizing Flow Transformation (CONTRA) \citep{fang2025contra} seeks to map varying conditional distributions to a fixed base distribution, while methods such as Joint Adaptive Prediction Areas with Normalizing-Flows (JAPAN) \citep{english2025japan}, Highest Predictive Density (HPD) split \citep{izbicki2022cd}, and the CP\textsuperscript{2} framework \citep{plassier2024probabilistic} use estimated conditional distributions or densities to construct flexible multimodal prediction regions and study approximate conditional validity.

Despite the widespread adoption of these various methods, quantifying the conditional coverage gap remains challenging, and many results still rely on highly \textit{ad hoc} convergence assumptions. Density-based scores, though often efficient in volume, may also lose interpretability because the resulting prediction regions do not necessarily admit a simple description, especially in high dimensions where learning the full conditional distribution is both statistically and computationally demanding. To address this, we develop a unified framework built upon Split Conformal Prediction (SCP), showing that conditional validity is fundamentally equivalent to the construction of pivotal scores whose distribution does not depend on $X$. We show that the idea underlying HPD-split extends to a general monotone, optimal transport-driven, post-processing correction acting on an arbitrary base nonconformity score potentially derived from highly accurate point predictions. Moreover, this transformation preserves the original geometry and marginal behavior of the score, while also reducing the problem to one-dimensional conditional density estimation, even when the original outcome space is high-dimensional.

\subsection*{Main Contributions}

Building on recent work, we:
\begin{itemize}[itemsep=0pt]
    \item derive general Kolmogorov--Smirnov bounds on the conditional coverage gap for i.i.d. data, showing the equivalence between conditional coverage and distributional invariance of the nonconformity score with respect to the features (Section~\ref{sec:framework}).
    \item introduce a high-level post-processing correction that maps any base nonconformity score to an approximately invariant one while preserving its geometry and marginal coverage (Subsections~\ref{subsec:pit} and \ref{subsec:pit_plugin}).
    \item derive explicit finite-sample deterministic and high-probability bounds on the conditional coverage gap of our plug-in correction (Subsection~\ref{subsec:cond}), and additionally establish volumetric and symmetric-difference bounds (Subsection~\ref{subsec:vol}).
    \item derive explicit tight convergence rates for the conditional coverage gap under suitable smoothness assumptions, while also motivating the use of modern, state-of-the-art estimators such as Mixture Density Networks and Conditional Normalizing Flows in practical algorithms (Section~\ref{sec:algorithms}).
    \item illustrate and compare our method on both simulated and real-world data using our companion \href{https://pypi.org/project/pitcp/}{\texttt{pitcp}} Python package, highlighting the versatility, simplicity, and performance of the PIT-CP procedure (Section~\ref{sec:num}).
\end{itemize}

\section{Theoretical Framework} \label{sec:framework}

Before detailing our proposed methodology, we first warm up by recalling the standard split conformal prediction framework and its key assumptions. We then introduce structural assumptions under which pivotal scores formalize exact conditional coverage, and subsequently extend this perspective to approximate conditional coverage guarantees.

Throughout the following sections, we assume that all conditional distributions are well-defined whenever invoked. In particular, this is guaranteed for random variables taking values in Polish spaces.

\subsection{Split Conformal Prediction Setting} \label{subsec:setting}

Let $\mathcal{X}$ be a feature space and $\mathcal{Y}$ an outcome space. While $\mathcal{Y}$ is often a subset of $\mathbb{R}$ in univariate settings \citep{vovk2005algorithmic}, our framework accommodates multidimensional spaces ($\mathcal{Y} \subseteq \mathbb{R}^d$ with $d > 1$). We observe $\left\{ (X_i, Y_i) \right\}_{i=1}^n \subseteq \mathcal{X} \times \mathcal{Y}$ for $n \geq 1$, with marginal distribution $P_{XY}$. Distribution-free guarantees rely on the following assumption.

\begin{assumption}[Exchangeability] \label{ass:exchangeability}
The calibration data $\mathcal{D}_n \coloneqq \left\{ (X_i, Y_i) \right\}_{i=1}^n$ and the test point $(X_{n + 1}, Y_{n + 1})$ are exchangeable, \textit{i.e.}, the joint distribution of $(X_i, Y_i)_{i=1}^{n + 1}$ is invariant under permutations
\[
\forall \sigma \in \mathfrak{S}_{n + 1}, \, (X_1, Y_1), \dots, (X_{n + 1}, Y_{n + 1}) \overset{d}{=} (X_{\sigma(1)}, Y_{\sigma(1)}), \dots, (X_{\sigma(n + 1)}, Y_{\sigma(n + 1)}).
\]
\end{assumption}

Exchangeability is standard and holds in particular for i.i.d. data. In split conformal prediction, we typically use a measurable nonconformity score $s: \mathcal{X} \times \mathcal{Y} \to \mathbb{R}$ measuring disagreement between $y$ and the model prediction at $x$. To avoid ties, we require the following continuity assumption.

\begin{assumption}[Continuous Score Distribution] \label{ass:cont}
For $(X, Y) \sim P_{XY}$, the nonconformity score $S \coloneqq s(X, Y)$ has a continuous cumulative distribution function (CDF), and $\mathbb{P}(S_i = S_j) = 0$ for all $i \neq j$, where $S_i \coloneqq s(X_i, Y_i)$ for $i \in \llbracket n \rrbracket$.
\end{assumption}

\begin{remark}
If the CDF is not continuous, one can always add a small continuous i.i.d. noise term to obtain a continuous CDF.
\end{remark}

Given $\mathcal{D}_n$, for a new feature $X_{n + 1} = x$, we construct a prediction region with target coverage $1 - \alpha \in (0, 1)$ by computing the empirical quantile $\hat{q}_{1 - \alpha}$ as the $k$-th order statistic of $\{ S_i \}_{i=1}^n \cup \{ +\infty \}$ with $k = \lceil (n + 1)(1 - \alpha) \rceil$, and $+\infty$ if $k = n + 1$.

The conformal prediction region is then defined by
\[
\widehat{C}_{1 - \alpha}(x) \coloneqq \{ y \in \mathcal{Y} : s(x, y) \leq \hat{q}_{1 - \alpha} \}.
\]

By Assumption~\ref{ass:exchangeability}, we obtain marginal, distribution-free validity \citep{vovk2005algorithmic}
\begin{equation} \label{eq:marginal}
\mathbb{P}_{\mathcal{D}_{n + 1}}\left( Y_{n + 1} \in \widehat{C}_{1 - \alpha}(X_{n + 1}) \right) \geq 1 - \alpha,
\end{equation}
and under Assumption~\ref{ass:cont}
\begin{equation} \label{eq:exact_marginal}
1 - \alpha \leq \mathbb{P}_{\mathcal{D}_{n + 1}}\left( Y_{n + 1} \in \widehat{C}_{1 - \alpha}(X_{n + 1}) \right) \leq 1 - \alpha + \frac{1}{n + 1},
\end{equation}
where $\mathbb{P}_{\mathcal{D}_{n + 1}}$ denotes the marginalized probability over the calibration data and $(X_{n + 1}, Y_{n + 1})$.

These follow from the rank of $S_{n + 1}$ being uniformly distributed among the $n + 1$ scores under exchangeability. From this point onward, all subsequent results are to be understood as implicitly marginalized over $\mathcal{D}_n$ except when stated otherwise, even when not stated explicitly. This important point should be kept in mind, as notation can often be misleading.

\subsection{Pivotal Scores}

As marginal validity is often insufficient in practice, we instead aim for conditional coverage, namely $\mathbb{P}\left( Y_{n + 1} \in \widehat{C}_{1 - \alpha}(x) \mid X_{n + 1} = x \right) \geq 1 - \alpha$ for almost all $x \in \mathcal{X}$. However, as previously discussed, exact nontrivial conditional coverage is unattainable without additional structural assumptions on the data-generating distribution. Still, considering the idealized (oracle) setting in which the joint distribution $P_{XY}$ is known provides useful guidance for designing estimation strategies. This perspective motivates the introduction of scores whose distribution is invariant with respect to $X$.

\begin{definition}[Pivotal Score] \label{def:pivotal}
A nonconformity score $S \coloneqq s(X, Y)$ is pivotal given $(X, Y) \sim P_{XY}$ if its distribution does not depend on $X$, \textit{i.e.}, $S \indep X$.
\end{definition}

\begin{remark}
Equivalently, for $P_X$-almost all $x \in \mathcal{X}$, $\forall t \in \mathbb{R}, \, F_{S \mid X = x}(t) = F_S(t)$.
\end{remark}

We now strengthen the exchangeability assumption and suppose that the data are i.i.d. This requirement is needed because, under exchangeability alone, conditioning on $X_{n + 1}$ may change the joint distribution of the sample and affect both the calibration scores and the quantile $\hat{q}_{1 - \alpha}$. By contrast, under the i.i.d. assumption, if the score is pivotal, then the rank of $S_{n + 1}$ remains uniformly distributed even conditionally on given features $X_{n + 1}$, which leads to the conditional validity statement below.

\begin{assumption}[Independent and Identically Distributed Data] \label{ass:iid}
The calibration data $\mathcal{D}_n$ and $(X_{n + 1}, Y_{n + 1})$ are drawn i.i.d. from a joint probability distribution $P_{XY}$, \textit{i.e.},
\[
\forall i \in \llbracket n + 1 \rrbracket, \, (X_i, Y_i) \overset{\mathrm{i.i.d.}}{\sim} P_{XY}.
\]
\end{assumption}

\bigbreak

We now present the main theorem of this section, establishing the equivalence between almost-sure conditional coverage and the pivotality of the score. We recall that we restrict ourselves to the split conformal framework, where $\widehat{C}_{1 - \alpha}(x)$ given $x \in \mathcal{X}$ is computed according to Subsection~\ref{subsec:setting}.

\begin{theorem}[Conditional Coverage] \label{thm:cond_coverage}
If Assumption~\ref{ass:iid} holds, then $\mathrm{(1)}$ and $\mathrm{(2)}$ are equivalent, where:
\begin{enumerate}[label=(\arabic*), noitemsep]
    \item $\forall n \geq 1, \forall \alpha \in (0, 1), \, \mathbb{P}\left( Y_{n + 1} \in \widehat{C}_{1 - \alpha}(X_{n + 1}) \mid X_{n + 1} \right) = \mathbb{P}\left( Y_{n + 1} \in \widehat{C}_{1 - \alpha}(X_{n + 1}) \right) \quad P_X \text{-a.s.}$,
    \item The nonconformity score $s$ is pivotal, that is $S \indep X$.
\end{enumerate}
\end{theorem}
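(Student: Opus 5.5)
The plan is to reduce both directions to one representation of the conditional coverage. Under Assumption~\ref{ass:iid}, the calibration scores $S_1,\dots,S_n$ are i.i.d.\ with CDF $F_S$ and independent of $(X_{n+1},Y_{n+1})$. The quantile $\hat q_{1-\alpha}$ is a measurable function of $\mathcal D_n$ alone, so it is independent of $X_{n+1}$. Conditioning on $X_{n+1}=x$ and applying Fubini therefore gives, for $P_X$-almost all $x$,
\[
\mathbb P\bigl(Y_{n+1}\in\widehat C_{1-\alpha}(X_{n+1})\mid X_{n+1}=x\bigr)=\mathbb E\bigl[F_{S\mid X=x}(\hat q_{1-\alpha})\bigr],
\qquad
\mathbb P\bigl(Y_{n+1}\in\widehat C_{1-\alpha}(X_{n+1})\bigr)=\mathbb E\bigl[F_{S}(\hat q_{1-\alpha})\bigr].
\]
Here $\hat q_{1-\alpha}=S_{(k:n)}$ with $k=\lceil (n+1)(1-\alpha)\rceil$, and $F(+\infty)=1$ covers the case $k=n+1$.

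\emph{(2) $\Rightarrow$ (1).} If $S\indep X$, then $F_{S\mid X=x}=F_S$ for $P_X$-almost all $x$. The two expectations above then coincide for every $n$ and $\alpha$, which is exactly (1).

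\emph{(1) $\Rightarrow$ (2).} As $\alpha$ ranges over $(0,1)$, $k$ takes every value in $\llbracket n+1\rrbracket$, so only countably many pairs $(n,k)$ occur. Taking the union of the corresponding $P_X$-null sets, we obtain one full-measure set of $x$ on which $\mathbb E[F_{S\mid X=x}(S_{(k:n)})]=\mathbb E[F_S(S_{(k:n)})]$ holds for all $n\ge1$ and all $k\le n$. Since $F_S$ is continuous (Assumption~\ref{ass:cont}), $U_i\coloneqq F_S(S_i)$ are i.i.d.\ uniform, and $S_{(k:n)}=F_S^{-1}(U_{(k:n)})$ almost surely. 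Set $h_x\coloneqq F_{S\mid X=x}\circ F_S^{-1}$ on $(0,1)$. The identity becomes
\[
\int_0^1 \bigl(h_x(u)-u\bigr)\,u^{k-1}(1-u)^{n-k}\,du=0\qquad\text{for all } 1\le k\le n .
\]
For fixed $n$, the functions $u^{k-1}(1-u)^{n-k}$ with $1\le k\le n$ form the Bernstein basis of polynomials of degree at most $n-1$. So $h_x(u)-u$ is orthogonal to every polynomial, and since it is bounded, Weierstrass density gives $h_x(u)=u$ for Lebesgue-a.e.\ $u$. Because $h_x$ is nondecreasing and right-continuous, this equality holds for every $u$, and hence $F_{S\mid X=x}(t)=F_S(t)$ at all $t$ in the range of $F_S^{-1}$.

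\emph{Main obstacle.} The delicate step is extending this last equality to all $t\in\mathbb R$, i.e.\ handling intervals on which $F_S$ is flat, where composing with $F_S^{-1}$ loses information. On such an interval $I$ we have $\mathbb P(S\in I)=0$, so $\int \mathbb P(S\in I\mid X=x)\,P_X(dx)=0$, and the conditional law also gives zero mass to $I$ for $P_X$-a.e.\ $x$. There are at most countably many flat intervals, so the exceptional null sets can be united. On each remaining $x$, $F_{S\mid X=x}$ is then constant on every flat interval and agrees with $F_S$ at its endpoints. This yields $F_{S\mid X=x}=F_S$ everywhere for $P_X$-almost all $x$, i.e.\ $S\indep X$.
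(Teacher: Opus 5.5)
\textbf{Gap: the proof assumes continuity of $F_S$.} Your route for (1) $\Rightarrow$ (2) matches the paper's: write both coverages as expectations over the independent calibration quantile, turn (1) into vanishing polynomial moments via order-statistic densities, and conclude by Weierstrass density. You use the full Bernstein basis for each $n$, while the paper uses only $k=n$, i.e.\ the monomials $nu^{n-1}$; either works.

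The problem is the step ``since $F_S$ is continuous (Assumption~\ref{ass:cont}), $U_i\coloneqq F_S(S_i)$ are i.i.d.\ uniform.'' Continuity of $F_S$ is not a hypothesis of the theorem; only Assumption~\ref{ass:iid} is assumed. If $S$ has an atom at $t_0$, then $F_S(S_i)$ is not uniform. Moreover, $F_S(F_S^{-1}(u))=F_S(t_0)>u$ on the whole interval $u\in(F_S(t_0-),F_S(t_0))$. So the identity $\mathbb{E}[F_S(S_{(k:n)})]=\mathbb{E}[U_{(k:n)}]$ fails, and the reduction to $\int_0^1(h_x(u)-u)\,u^{k-1}(1-u)^{n-k}\,du=0$ breaks. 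As written, your argument says nothing about scores with atoms, for instance discrete scores, where the equivalence still holds.

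\textbf{How the paper avoids it.} The paper uses the quantile coupling $S_i\overset{d}{=}Q_S(U_i)$ with genuinely uniform $U_i$, which is valid for any $F_S$. Then (1) gives $F_{S\mid X=x}(Q_S(u))=F_S(Q_S(u))$ for a.e.\ $u$. It passes to arbitrary thresholds by a one-sided argument:
\begin{itemize}
\item for $u<F_S(t)$ one has $Q_S(u)\le t$, hence $F_{S\mid X=x}(t)\ge F_S(Q_S(u))\ge u$, and letting $u\uparrow F_S(t)$ gives $F_{S\mid X=x}(t)\ge F_S(t)$;
\item since $\mathbb{E}[F_{S\mid X}(t)]=F_S(t)$, equality holds $P_X$-a.s.\ for each fixed $t$;
\item taking rational $t$ and using right-continuity finishes the proof.
\end{itemize}
This handles atoms and flat stretches of $F_S$ at once, so your separate flat-interval discussion becomes unnecessary. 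Plugging this coupling and this final step into your Bernstein argument repairs the proof.

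\textbf{Smaller slip.} With the usual left-continuous quantile function, $F_{S\mid X=x}\circ F_S^{-1}$ is not right-continuous a priori; it can jump at levels where $F_S$ is flat. You only need monotonicity: a nondecreasing function equal a.e.\ to the continuous map $u\mapsto u$ equals it everywhere.
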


\bigbreak

The proof of Theorem~\ref{thm:cond_coverage} is provided in Appendix~\ref{pf:cond_coverage}.

\bigbreak

Therefore, pivotal scores transfer marginal validity to conditional validity, and the converse also holds with i.i.d. data, so that the two properties are in that sense essentially equivalent. From this perspective, conditional conformal prediction may be viewed as the problem of constructing nonconformity scores that are, at least approximately, independent of $X$. This also explains the impossibility results for distribution-free conditional coverage, as a score pivotal under all joint distributions on $(X, Y)$ is necessarily almost surely constant, rendering the prediction regions trivial.

\begin{remark}
It is important to note that the fact that the distribution of $S \coloneqq s(X, Y)$ is independent of $X$ does not imply that $\widehat{C}_{1 - \alpha}$ itself is independent of $X$. On the contrary, as illustrated in the framework of \emph{aware} algorithmic fairness, achieving independence from $X$ does in fact require explicitly incorporating $X$ into the construction.
\end{remark}

\subsection{Approximate Independence Bounds}

Since exact independence is typically unattainable without knowledge of the data-generating process, we consider approximate independence and quantify the resulting conditional coverage gap, acting uniformly over all miscoverage levels $\alpha \in (0, 1)$. Given features $x \in \mathcal{X}$, we use the Kolmogorov--Smirnov distance between $F_{S \mid X = x}$ and $F_S$, as in \cite{ding2023class} and \cite{gao2024adjusting}.

\begin{definition}[Kolmogorov--Smirnov distance] \label{def:ks_distance}
Let $F$ and $G$ be CDFs on $\mathbb{R}$. The Kolmogorov--Smirnov distance between $F$ and $G$ is defined by
\[
d_{KS}(F, G) \coloneqq \sup_{t \in \mathbb{R}}\left\vert F(t) - G(t) \right\vert.
\]
\end{definition}

\begin{definition}[Conditional Coverage Gap] \label{def:coverage_gap}
Upon observing features $X_{n + 1} = x \in \mathcal{X}$, the (uniform) conditional coverage gap (with implicit dependence on $n$) is defined by
\[
\Delta(x) \coloneqq \sup_{\alpha \in (0, 1)} \left\vert \mathbb{P}\left( Y_{n + 1} \in \widehat{C}_{1 - \alpha}(x) \mid X_{n + 1} = x \right) - \mathbb{P}\left(Y_{n + 1} \in \widehat{C}_{1 - \alpha}(X_{n + 1}) \right) \right\vert.
\]
\end{definition}

\begin{lemma}[Kolmogorov--Smirnov Conditional Coverage Gap Bound] \label{lem:ks_bound}
Suppose Assumption~\ref{ass:iid} holds. Then, the conditional coverage gap is (deterministically) bounded by
\[
\Delta(x) \leq d_{KS}(F_{S \mid X = x}, F_S).
\]
\end{lemma}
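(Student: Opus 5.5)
The plan is to condition on the calibration data first and then integrate it out. Under Assumption~\ref{ass:iid}, the calibration sample $\mathcal{D}_n$ is independent of $(X_{n+1}, Y_{n+1})$. The threshold $\hat{q}_{1-\alpha}$ is a measurable function of $\mathcal{D}_n$ alone, and $Y_{n+1} \in \widehat{C}_{1-\alpha}(X_{n+1})$ holds exactly when $S_{n+1} \le \hat{q}_{1-\alpha}$. By the independence of $\mathcal{D}_n$ and $(X_{n+1},Y_{n+1})$ (a Fubini/freezing argument), we can therefore write
\[
\mathbb{P}\left( Y_{n+1} \in \widehat{C}_{1-\alpha}(x) \mid X_{n+1} = x \right) = \mathbb{E}_{\mathcal{D}_n}\left[ F_{S \mid X = x}(\hat{q}_{1-\alpha}) \right],
\]
and, by the same argument applied to the unconditional probability,
\[
\mathbb{P}\left( Y_{n+1} \in \widehat{C}_{1-\alpha}(X_{n+1}) \right) = \mathbb{E}_{\mathcal{D}_n}\left[ F_S(\hat{q}_{1-\alpha}) \right],
\]
using $F_S(t) = \int F_{S\mid X=x'}(t)\,dP_X(x')$. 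When $\hat{q}_{1-\alpha} = +\infty$, we set $F(+\infty) = 1$ for both CDFs, so that case contributes zero difference.

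With these two representations in hand, for each fixed $\alpha$ we bound
\[
\left| \mathbb{E}_{\mathcal{D}_n}\left[ F_{S\mid X=x}(\hat{q}_{1-\alpha}) - F_S(\hat{q}_{1-\alpha}) \right] \right| \le \mathbb{E}_{\mathcal{D}_n}\left[ \sup_{t\in\mathbb{R}\cup\{+\infty\}} \left| F_{S\mid X=x}(t) - F_S(t) \right| \right] = d_{KS}(F_{S\mid X=x}, F_S).
\]
This uses Jensen's inequality and the fact that the supremum does not depend on $\mathcal{D}_n$. The bound is uniform in $\alpha$, so taking the supremum over $\alpha \in (0,1)$ gives $\Delta(x) \le d_{KS}(F_{S\mid X=x}, F_S)$. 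It is also deterministic in $x$, in the sense that it holds for every $x$ at which the regular conditional distribution is defined, with no probability statement over $x$.

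The main point requiring care is the first step: justifying that conditioning on $X_{n+1}=x$ leaves the law of $\hat{q}_{1-\alpha}$ unchanged. This is exactly where the i.i.d. assumption is needed, since mere exchangeability would not suffice. I would make this rigorous via the disintegration $P_{XY} = P_X \otimes P_{Y\mid X}$ together with the product structure $P_{XY}^{\otimes(n+1)}$. With these, the conditional probability given $X_{n+1}=x$ is $\int \mathbf{1}\{s(x,y) \le \hat{q}(d)\}\, dP_{Y\mid X=x}(y)\, dP^{\otimes n}_{XY}(d)$, and the inner integral equals $F_{S\mid X=x}(\hat{q}(d))$. Apart from this, the argument is a short chain of inequalities.
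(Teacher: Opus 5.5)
Your proposal is correct and follows essentially the same route as the paper: you use the independence of $\mathcal{D}_n$ from the test point to write both coverage probabilities as $\mathbb{E}_{\mathcal{D}_n}$ of $F_{S \mid X = x}(\hat{q}_{1-\alpha})$ and $F_S(\hat{q}_{1-\alpha})$, then bound the difference uniformly in $\alpha$ by the supremum over thresholds. Your explicit handling of the case $\hat{q}_{1-\alpha} = +\infty$ and of the disintegration $P_X \otimes P_{Y \mid X}$ is extra care that the paper leaves implicit.
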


\bigbreak

The proof of Lemma~\ref{lem:ks_bound} is provided in Appendix~\ref{pf:ks_bound}.

\begin{remark}
The bound of Lemma~\ref{lem:ks_bound} is asymptotically tight for the full conditional coverage gap under Assumption~\ref{ass:cont}. Indeed, the empirical quantiles converge almost surely to the true population quantiles, so the \textit{supremum} over $\alpha \in (0, 1)$ yields a gap that converges exactly to the Kolmogorov--Smirnov distance $d_{KS}(F_{S \mid X = x}, F_S)$.
\end{remark}

\begin{remark}
The implicit dependence of the conditional coverage gap on $n$ is resolved by Lemma~\ref{lem:ks_bound}, which ensures a universal bound independent of the calibration size.
\end{remark}

\bigbreak

This result already suggests a natural approach: any procedure that enforces approximate distributional invariance of the scores across values of $X$ in a Kolmogorov--Smirnov sense automatically controls the conditional coverage gap uniformly over all miscoverage levels $\alpha \in (0, 1)$.

\begin{illustration} \label{il:exp}
Consider $X \sim \mathrm{Unif}(0, 1)$ and heteroscedastic noise $Y \mid X = x \sim \mathrm{Lap}\left( 0, \frac{1}{x+1} \right)$. Using the absolute residual $s(x, y) = \vert y \vert$, one has $S \mid X = x \sim \mathrm{Exp}(x + 1)$, and thus for all $t > 0$
\[
F_{S \mid X = x}(t) = 1 - e^{-(x+1)t}, \quad F_S(t) = \int_0^1 \left( 1 - e^{-(x+1)t} \right) \, dx = 1 - \frac{e^{-t} - e^{-2t}}{t}.
\]

Figure~\ref{fig:ks} visualizes this envelope, highlighting the Kolmogorov--Smirnov distance which explicitly bounds the conditional coverage gap.
\end{illustration}

\begin{figure}[H]
    \centering
    \includegraphics[width=0.6\linewidth]{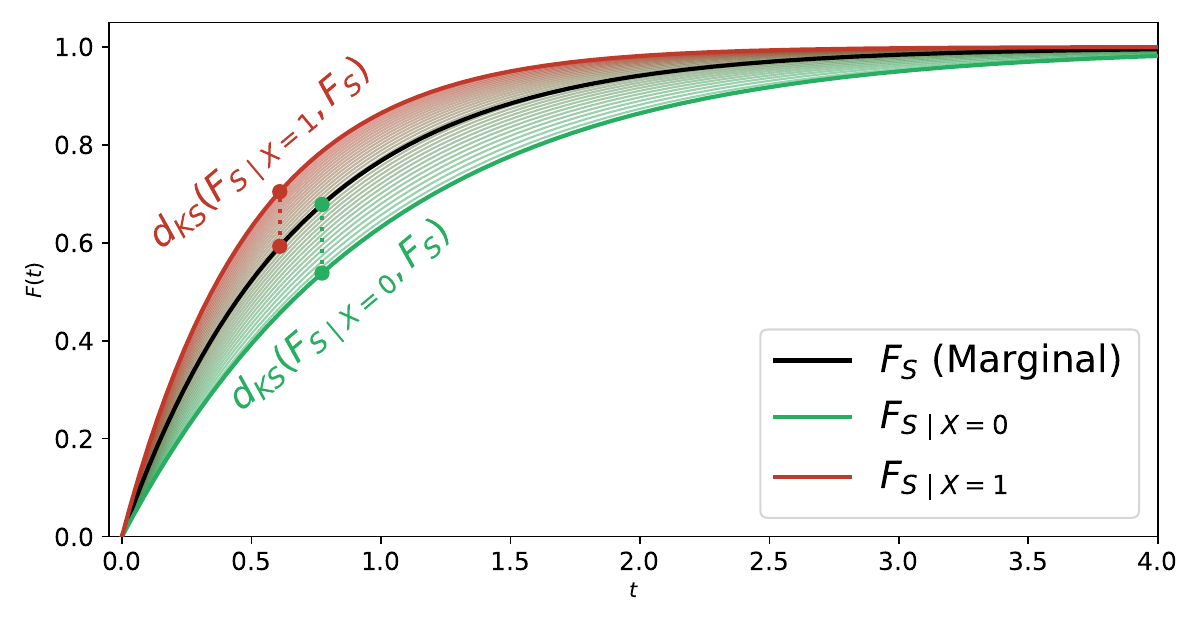}
    \caption{Illustration of the Kolmogorov--Smirnov distance under the heteroscedastic noise model of Illustration~\ref{il:exp}. Multiple conditional CDFs are plotted with color gradient, uniformly ranging from $F_{S \mid X = 0}$ to $F_{S \mid X = 1}$.}
    \label{fig:ks}
\end{figure}

\section{Related Work}

In practice, through the lens of Definition~\ref{def:pivotal}, many known methods can be interpreted as trying to make the score distribution less dependent on $X$, progressing from standardization techniques to full distributional normalization.

\paragraph{Locally Adaptive Scaling.}

A classical strategy for heteroscedastic regression is to standardize residuals using estimated conditional location and scale, popularized by \cite{papadopoulos2008normalized} and \cite{lei2018distribution} among others. In this setting, assuming a univariate response, the score is defined by
\[
s_\mathrm{LS}(x, y) = \frac{\vert y - \hat{\mu}(x) \vert}{\hat{\sigma}(x)},
\]
where $\hat{\mu}$ and $\hat{\sigma}$ typically estimate the conditional mean and standard deviation.

Under an exact location-scale model, for instance $Y \mid X \sim \mathcal{N}\left( \mu(X), \sigma(X)^2 \right)$ with $\hat{\mu} = \mu$ and $\hat{\sigma} = \sigma$, the score is independent of $X$ up to a half-normal distribution, hence pivotal. More generally, this method normalizes the first two conditional moments and can improve coverage under heteroscedasticity, but it does not correct higher-order features such as skewness or multimodality. These types of location-scale models have also been studied in other nonparametric approaches such as that of \cite{matabuena2024conformal}.

\paragraph{Cumulative Distribution Approaches.}

A stronger idea is to normalize the full conditional distribution. For real-valued responses, Distributional Conformal Prediction (DCP) \citep{chernozhukov2021distributional} uses an estimated conditional cumulative distribution function and defines
\[
s_\mathrm{DCP}(x, y) = \left\vert \widehat{F}_{Y \mid X = x}(y) - \frac{1}{2} \right\vert.
\]

The rationale is rooted in the Probability Integral Transform (PIT): if $Y \mid X$ is continuous, then $F_{Y \mid X}(Y)$ is uniformly distributed conditionally on $X$. Hence, accurate estimation of the conditional cumulative distribution function $\widehat{F}_{Y \mid X}$ makes the transformed score approximately invariant with respect to $X$. Compared with locally adaptive scaling, DCP targets the full conditional distribution rather than only its first two moments, and unlike CQR, it also does not require training separate models for each miscoverage level.

\paragraph{Normalizing Flow Approaches.}

For multidimensional outcomes, recent work uses Conditional Normalizing Flows to map conditional distributions to a fixed latent distribution, typically chosen as a multivariate normal distribution purely for convenience and numerical stability. In Conformal Prediction Region via Normalizing Flow Transformation (CONTRA) \citep{fang2025contra}, the procedure learns a parametric diffeomorphism $f_\theta$ sending a pair $(x, y) \in \mathcal{X} \times \mathcal{Y}$ to
\[
z = f_\theta(y \mid x),
\]
and defines the nonconformity score quantile by the Euclidean norm in the latent space
\[
s_\mathrm{CONTRA}(x, y) = \Vert f_\theta(y \mid x) \Vert.
\]

If the flow successfully transports, the conditional distribution $P_{Y \mid X = x}$ close to a feature-independent base measure for all $x \in \mathcal{X}$, then the induced latent score varies less across $x$. For instance, the resulting score is close to $\sqrt{\chi^2(d)}$ in distribution when the latent space is equipped with the standard normal distribution in $\mathbb{R}^d$. Although the main appeal of this method is driven by the geometrical properties of the resulting conformal prediction regions, approximate invariance may arise as a by-product and thus approximate conditional validity, as the authors briefly mention through Theorem~2.7 of \cite{colombo2024normalizing}.

\paragraph{Density-Based Frameworks.}

Another idea is to use estimated conditional densities directly to approach smallest-volume prediction regions. In methods such as HPD-split \citep{izbicki2022cd} and Joint Adaptive Prediction Areas with Normalizing Flows (JAPAN) \citep{english2025japan}, one first fits a generative model to approximate the conditional distribution of the outcome given the features, yielding a base nonconformity score
\[
s_\mathrm{density}(x, y) = -p_\theta(y \mid x).
\]

However, such density scores are not generally pivotal, since their distribution still depends on $X$. HPD-split addresses this by estimating the conditional distribution of the density scores and applying a conditional quantile transform, thereby stabilizing their distribution while preserving split conformal marginal validity. JAPAN instead works directly with the estimated density with no intention of achieving conditional validity.

Lastly, the CP\textsuperscript{2} method \citep{plassier2024probabilistic} also relies on an estimated conditional distribution, using a score induced by a Mixture Density Network to produce interpretable prediction regions, typically unions of ellipses. These density-based methods are closely related to our viewpoint: they seek approximate conditional validity through distributional normalization, but usually remain tied to a specific score construction and often require modeling the full conditional distribution of $Y \mid X$, rather than a systematic post-processing correction.

\section{Theoretical Construction and Guarantees}

While methods such as DCP, CONTRA, HPD-split, and CP\textsuperscript{2} use generative modeling to encourage distributional invariance, they remain tied to specific score constructions and often require estimating the full conditional distribution of the response variable. In multidimensional outcome spaces, this can be statistically difficult and computationally costly, and thus motivates a more general principle that acts directly on an arbitrary base score, including problem-specific scores derived from pre-trained predictive models whose geometry or interpretability one may wish to retain.

We therefore introduce a monotone post-processing correction that maps any base nonconformity score to a pivotal score while preserving its properties. Conceptually, this extends the calibration idea underlying HPD-split, but separates the correction from the initial score definition.

\bigbreak

To formalize this construction, let $s$ denote an arbitrary measurable base nonconformity score, such as an absolute residual, a neural network distance, or a negative $\log$-probability score, with no special constraints. To simplify the notation, in the following sections, also let $(X, Y) = (X_{n + 1}, Y_{n + 1})$ denote the test point, which is independent of the calibration data $\mathcal{D}_n$ under Assumption~\ref{ass:iid}.

\subsection{The PIT Correction} \label{subsec:pit}

In general, the distribution of $S \coloneqq s(X, Y)$ depends on $X$ and therefore does not by itself guarantee conditional coverage. We address this by composing the base score with the nondecreasing conditional CDF of the induced score, thereby preserving the shape of the conformal prediction regions.

\begin{definition}[PIT Correction] \label{def:pit_correction}
Given a base score $s$, the PIT-corrected score $\tilde{s}$ is defined by
\[
\forall (x, y) \in \mathcal{X} \times \mathcal{Y}, \, \tilde{s}(x, y) \coloneqq \mathbb{P}\left( s(x, Y) \leq s(x, y) \mid X = x \right) = F_{S \mid X = x}\left( s(x, y) \right).
\]
\end{definition}

\begin{example}[Locally Adaptive Scaling]
Suppose $Y \mid X = x \sim \mathcal{N}\left( \mu(x), \sigma(x)^2 \right)$ and consider the absolute residual score $s(x, y) = \vert y - \mu(x) \vert$. Conditionally on $X = x$, the score follows a scaled half-normal distribution, yielding the PIT-corrected score
\[
\tilde{s}(x, y) = \mathbb{P}\left( \frac{\vert  Y - \mu(x) \vert}{\sigma(x)} \leq \frac{\vert y - \mu(x) \vert}{\sigma(x)} \mid X = x \right) = 2 \Phi\left( \frac{\vert y - \mu(x) \vert}{\sigma(x)} \right) - 1,
\]
where $\Phi$ is the standard normal cumulative distribution function. As such, the PIT correction behaves exactly as the classical locally adaptive scaling score of \cite{lei2018distribution}, up to a strictly increasing transformation.
\end{example}

\bigbreak

Taking its name from the Probability Integral Transform, the main property of the PIT correction is its monotonicity. For any fixed $x \in \mathcal{X}$, the map
\[
t \mapsto \mathbb{P}\left( s(x, Y) \leq t \mid X = x \right) = F_{S \mid X = x}(t)
\]
is a cumulative distribution function, hence nondecreasing in $t$.

As formalized by \cite{hanselle2025conformal}, split conformal prediction depends only on the ordering of candidate labels. Since the PIT correction is a nondecreasing transform of $s$, it preserves the ranking of the data. More formally, the sublevel sets of $\tilde{s}$ coincide with those of $s$ after reparameterizing the threshold: for any fixed $x \in \mathcal{X}$ and any threshold $\tilde{t} \in (0, 1)$, there exists a corresponding threshold $t$ such that
\[
\{ y \in \mathcal{Y} : \tilde{s}(x, y) \leq \tilde{t} \} = \{ y \in \mathcal{Y} : s(x, y) \leq t \}.
\]

Thus, the PIT correction does not alter the geometry of the initial prediction regions: it only reindexes the same nested family of sets in a probability scale. It acts as a post-processing transformation that completely decouples the initial predictive modeling task from that of achieving conditional coverage. Also, unlike \cite{plassier2025rectifying}, it does not target a single quantile, imposes no Lipschitz assumptions in general, and instead acts uniformly over all quantiles by uniformizing the full conditional distribution. As such, the prediction regions can be readily obtained for all miscoverage levels $\alpha \in (0, 1)$, without requiring separate models.

Beyond preserving geometry, under a continuity assumption, the transformed score is also distributionally invariant, leading to the result below.

\begin{assumption}[Continuous Conditional CDF] \label{ass:cont_cond}
For all $x \in \mathcal{X}$, the conditional CDF $F_{S \mid X = x}: t \mapsto \mathbb{P}\left( s(x, Y) \leq t \mid X = x \right)$ is continuous.
\end{assumption}

\begin{proposition}[Invariance of the CDF-Transformed Score] \label{prop:pit_invariance}
Suppose Assumption~\ref{ass:cont_cond} holds. Then, the PIT-corrected score $\widetilde{S} \coloneqq \tilde{s}(X, Y)$ as in Definition~\ref{def:pit_correction} is pivotal and uniformly distributed on $(0, 1)$.
\end{proposition}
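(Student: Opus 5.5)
The plan is to show that for $P_X$-almost every $x \in \mathcal{X}$ the conditional law of $\widetilde{S}$ given $X = x$ is $\mathrm{Unif}(0,1)$. Since this conditional law then does not depend on $x$, we get $\widetilde{S} \indep X$ (pivotality in the sense of Definition~\ref{def:pivotal}), and integrating over $P_X$ gives that $\widetilde{S}$ is marginally uniform.

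Fix $x$ and write $F \coloneqq F_{S \mid X = x}$, which is continuous by Assumption~\ref{ass:cont_cond}. Conditionally on $X = x$, we have $\widetilde{S} = F(S)$ with $S \sim F$, so the claim reduces to the classical probability integral transform for a continuous CDF. For $u \in (0,1)$, define the generalized inverse
\[
F^{-1}(u) \coloneqq \inf\{ t \in \mathbb{R} : F(t) \geq u \}.
\]
The proof then proceeds in two steps.

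\emph{Step 1.} By right-continuity, $F(F^{-1}(u)) \geq u$. By continuity of $F$ and the intermediate value theorem, $u$ is attained, so in fact $F(F^{-1}(u)) = u$.

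\emph{Step 2.} Show the event identity $\{F(S) < u\} = \{S < F^{-1}(u)\}$. The inclusion $\supseteq$ follows from monotonicity together with the definition of the infimum. The inclusion $\subseteq$ holds because $S \geq F^{-1}(u)$ would give $F(S) \geq u$. Hence
\[
\mathbb{P}\left( F(S) < u \mid X = x \right) = \mathbb{P}\left( S < F^{-1}(u) \mid X = x \right) = F\left( F^{-1}(u)^- \right) = u,
\]
where the last equality uses continuity, since it implies that $F$ has no atoms.

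Uniformity then follows for all $u \in (0,1)$. We also have $\mathbb{P}(F(S) = u \mid X = x) = 0$, because the set $\{t : F(t) = u\}$ is an interval on which $F$ is constant, and it therefore carries zero mass under $F$. Consequently the closed and open forms of the CDF agree, and $\widetilde{S} \mid X = x \sim \mathrm{Unif}(0,1)$.

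The main obstacle is measure-theoretic rather than analytic. We must make sure that "$\widetilde{S}$ given $X = x$" is genuinely $F_{S \mid X = x}$ evaluated at a draw from $P_{Y \mid X = x}$. This needs joint measurability of $(x,t) \mapsto F_{S \mid X = x}(t)$, which holds for a regular conditional distribution (the paper assumes conditional distributions are well defined, e.g. on Polish spaces). It also needs the "freezing" or substitution lemma for conditional expectations of functions of $(X, Y)$. Once the conditional CDF of $\widetilde{S}$ is identified as the identity map on $(0,1)$ for almost all $x$, independence follows, since the joint law factorizes as $P_X \otimes \mathrm{Unif}(0,1)$, and marginal uniformity follows by the tower property.
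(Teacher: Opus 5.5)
Your proposal is correct and follows the same route as the paper. You fix $x$, identify $\widetilde{S}$ given $X = x$ as $F_{S \mid X = x}$ evaluated at its own draw, apply the probability integral transform under Assumption~\ref{ass:cont_cond}, and conclude pivotality because the resulting $\mathrm{Unif}(0,1)$ law does not depend on $x$. The only difference is that you prove the PIT directly via the generalized inverse and make the joint-measurability and substitution step explicit, whereas the paper simply cites the classical PIT theorem.
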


\bigbreak

The proof of Proposition~\ref{prop:pit_invariance} is provided in Appendix~\ref{pf:pit_invariance}.

\bigbreak

Under the standard continuity assumption, the PIT correction at each $x \in \mathcal{X}$ coincides with the one-dimensional monotone optimal transport map from $P_{S \mid X = x}$ to $\mathrm{Unif}(0, 1)$. As a nonconformity score, it additionally satisfies the marginal coverage property without requiring any distributional assumptions. Moreover, it is canonically determined by the conditional law of the score, since any monotone map sending $P_{S \mid X = x}$ to a common target distribution with a continuous cumulative distribution function must be of the form $\psi \circ F_{S \mid X = x}$ for some strictly increasing function $\psi$. This viewpoint also clarifies why the source distribution must be atomless. Indeed, a deterministic transport map cannot spread an atom of $P_{S \mid X = x}$ over an interval, so a score distribution with point masses cannot be mapped by a monotone transport to the diffuse distribution $\mathrm{Unif}(0, 1)$. This is consistent with the optimal transport perspective underlying Brenier's theorem \citep{brenier1991polar}. From this perspective, our correction parallels recent post-processing techniques for demographic parity in regression, which similarly construct transport maps to enforce distributional invariance with respect to a sensitive attribute \citep{chzhen2020fair}.

\begin{figure}[H]
    \centering
    \includegraphics[width=0.6\linewidth, trim=7 7 7 7, clip]{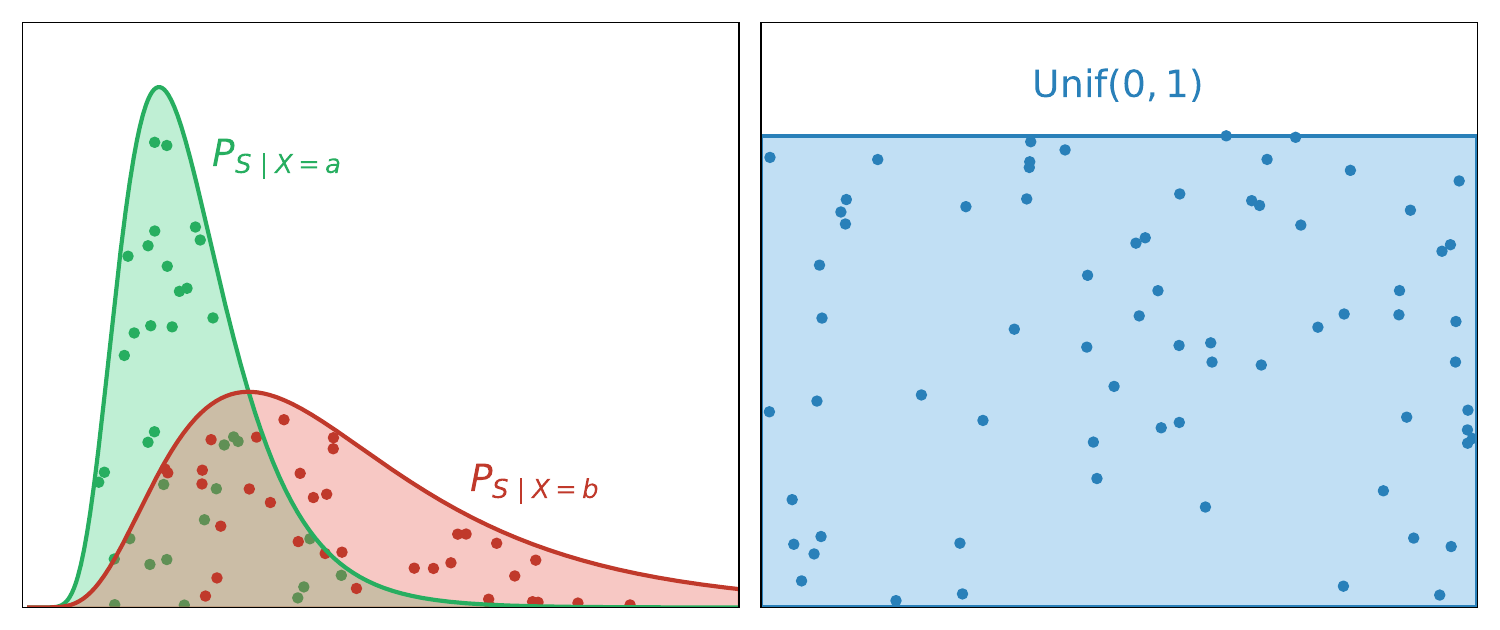}
    \caption{Left: conditional score distribution $P_{S \mid X = x}$ and its density; right: corresponding monotone transport to a common $\mathrm{Unif}(0, 1)$ distribution.}
    \label{fig:ot}
\end{figure}

\subsection{Plug-in Estimation} \label{subsec:pit_plugin}

In practice, the true conditional distribution of the base score is unknown. To approximate the ideal transform, we require an estimator indexed by $x \in \mathcal{X}$, either
\[
\widehat{P}_{S \mid X = x} \qquad \text{or} \qquad \widehat{P}_{Y \mid X = x}.
\]

Either is sufficient, since $\widehat{P}_{Y \mid X = x}$ induces a conditional pushforward distribution on $S \mid X = x$. Estimating $\widehat{P}_{S \mid X = x}$ is often preferable because the base score is always real-valued, regardless of the dimension of $\mathcal{Y}$. Intuitively, if the base score $s(x, y)$ incorporates a point prediction $\hat{f}(x)$, modeling the conditional distribution of the score reframes the problem from asking ``what is the full distribution of the target?'' to ``how uncertain is our prediction?''. As such, it is expected for the residuals or scores to be better behaved and their distribution easier to estimate. However, in settings such as HPD-split, the score is already defined through an estimate of $P_{Y \mid X = x}$, so a separate model for $P_{S \mid X}$ may be unnecessary. Our bounds apply in either case, up to replacing the Kolmogorov--Smirnov distance by the total variation distance, as made evident in Remark~\ref{rmk:contraction}.

For the moment, we assume that one of these two estimators is available, fixed, and independent of the calibration data $\mathcal{D}_n$, corresponding to the standard split between model fitting and conformal calibration. Consequently, all probabilistic statements in the following section are implicitly understood to be conditional on this fixed estimator. Under this assumption, the most natural estimator is the plug-in PIT-corrected score, defined by
\begin{equation} \label{eq:plugin_score}
\hat{s}(x, y) \coloneqq \widehat{F}_{S \mid X = x}\left( s(x, y) \right),
\end{equation}
and we denote by $\widehat{\Delta}(x)$ its corresponding conditional coverage gap for a given point $x \in \mathcal{X}$.

\subsection{Conditional Coverage Gap Bounds} \label{subsec:cond}

Because the true conditional distribution is replaced by an estimate, the score $\widehat{S} \coloneqq \hat{s}(X, Y)$ is typically no longer exactly uniformly distributed. Its conditional coverage error can still be bounded using the Kolmogorov--Smirnov distance. Since this supremum-based quantity is not especially convenient for optimization, we also relate it to the total variation distance and then to the Kullback--Leibler divergence via Pinsker's inequality. This is especially insightful as likelihood-based estimators are often trained through objectives involving Kullback--Leibler divergence, and thus indirectly optimize bounds on the conditional coverage gap. We recall the two discrepancy measures below.

\begin{definition}[Total Variation Distance] \label{def:tv_distance}
Let $P$ and $Q$ be two probability measures on a measurable space $(\Omega, \mathcal{F})$. The total variation distance between $P$ and $Q$ is defined by
\[
d_{TV}(P, Q) \coloneqq \sup_{A \in \mathcal{F}}\left\vert P(A) - Q(A) \right\vert.
\]
\end{definition}

\begin{definition}[Kullback--Leibler Divergence] \label{def:kl_divergence}
Let $P$ and $Q$ be two probability measures on a measurable space. The Kullback--Leibler divergence of $P$ with respect to $Q$ is defined by
\[
D_{KL}(P \parallel Q) \coloneqq \begin{cases} 
\int \log\left(\frac{dP}{dQ}\right) dP &\text{if } P \ll Q, \\
+\infty &\text{otherwise}.
\end{cases}
\]
\end{definition}

\bigbreak

To state the following lemma and theorem, which link the plug-in estimator more directly to the true conditional distribution than the implicit bound of Lemma~\ref{lem:ks_bound}, we impose the continuity condition below.

\begin{assumption}[Continuous Plug-in Conditional CDF] \label{ass:plugin_cont_cond}
For all $x \in \mathcal{X}$, the plug-in conditional cumulative distribution function $\widehat{F}_{S \mid X = x}$ is continuous.
\end{assumption}

\begin{lemma}[Convenient Upper Bounds] \label{lem:plugin_upper}
Suppose Assumptions~\ref{ass:iid} and \ref{ass:plugin_cont_cond} hold. Then, for all $x \in \mathcal{X}$
\begin{align}
d_{KS}(F_{\widehat{S} \mid X = x}, F_{\widehat{S}}) &\leq d_{KS}\left( F_{S \mid X = x}, \widehat{F}_{S \mid X = x} \right) + \mathbb{E}\left[ d_{KS}\left( F_{S \mid X}, \widehat{F}_{S \mid X} \right) \right], \label{eq:ks_bound} \\
d_{KS}(F_{\widehat{S} \mid X = x}, F_{\widehat{S}})^2 &\leq D_{KL}\left( P_{S \mid X = x} \parallel \widehat{P}_{S \mid X = x} \right) + \mathbb{E}\left[ D_{KL}\left( P_{S \mid X} \parallel \widehat{P}_{S \mid X} \right) \right]. \label{eq:kl_bound}
\end{align}
\end{lemma}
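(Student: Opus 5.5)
The plan is to compare both $F_{\widehat{S} \mid X = x}$ and $F_{\widehat{S}}$ to the uniform CDF $U(u) = u$ on $[0,1]$ and apply the triangle inequality:
\[
d_{KS}(F_{\widehat{S} \mid X = x}, F_{\widehat{S}}) \leq d_{KS}(F_{\widehat{S} \mid X = x}, U) + d_{KS}(U, F_{\widehat{S}}).
\]

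\textbf{Step 1: the conditional term.} I would show that $d_{KS}(F_{\widehat{S} \mid X = x}, U) \leq d_{KS}(F_{S \mid X = x}, \widehat{F}_{S \mid X = x})$. Fix $x$ and write $G = \widehat{F}_{S \mid X = x}$, which is continuous and nondecreasing by Assumption~\ref{ass:plugin_cont_cond}. Fix also $u \in (0,1)$. Since $G$ is continuous, the set $\{t : G(t) \leq u\}$ is a closed half-line $(-\infty, t_u]$, and $G(t_u) = u$. The cases $u$ outside the range of $G$ are handled separately, and there the half-line is either empty or all of $\mathbb{R}$. It follows that
\[
F_{\widehat{S} \mid X = x}(u) = \mathbb{P}(G(S) \leq u \mid X = x) = F_{S \mid X = x}(t_u).
\]
Therefore $|F_{\widehat{S} \mid X = x}(u) - u| = |F_{S \mid X = x}(t_u) - G(t_u)|$, which is at most the KS distance. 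No continuity of the true conditional CDF is needed here. The edge cases where $u$ lies outside the range of $G$ need care: there $F_{\widehat{S} \mid X = x}(u)$ is $0$ or $1$, and one compares it with $u$ using limits of $G$ at $\pm\infty$ and of $F_{S \mid X = x}$. I expect this bookkeeping to be the main (minor) obstacle. If a clean argument fails, I would fall back on a supremum over limits $t \to \pm\infty$ and use right-continuity of $F_{S\mid X=x}$.

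\textbf{Step 2: the marginal term.} Because the estimator is fixed and independent of the data, $F_{\widehat{S}}(u) = \mathbb{E}[F_{\widehat{S} \mid X}(u)]$. Jensen's inequality for the absolute value, then the supremum inside the expectation, give
\[
\sup_u |F_{\widehat{S}}(u) - u| \leq \mathbb{E}\left[\sup_u |F_{\widehat{S} \mid X}(u) - u|\right] \leq \mathbb{E}\left[d_{KS}(F_{S \mid X}, \widehat{F}_{S \mid X})\right]
\]
by Step 1. Measurability of the integrand follows by restricting the supremum to rational $u$, which is possible by right-continuity. This proves \eqref{eq:ks_bound}.

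\textbf{Step 3: the KL bound.} KS is dominated by TV (take sets $(-\infty,t]$), and Pinsker gives $d_{TV}(P,Q) \leq \sqrt{D_{KL}(P\parallel Q)/2}$. Write $a = d_{KS}(F_{S \mid X=x}, \widehat F_{S\mid X=x})$ and $b = \mathbb{E}[d_{KS}(F_{S\mid X}, \widehat F_{S\mid X})]$. Then $(a+b)^2 \leq 2a^2 + 2b^2 \leq 2a^2 + 2\mathbb{E}[d_{KS}^2]$, using Jensen for $b^2$. Each $2\,d_{KS}^2 \leq 2\,d_{TV}^2 \leq D_{KL}$, which yields \eqref{eq:kl_bound}. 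If the KL term is infinite the bound is trivial.
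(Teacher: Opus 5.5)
Your proposal is correct and follows essentially the same route as the paper. Both use the triangle inequality through the uniform CDF. Your identification $\{t : G(t) \leq u\} = (-\infty, t_u]$ with $G(t_u) = u$ is the paper's argument, with your $t_u$ playing the role of the paper's $y^*$ after its change of variables $t = \widehat{F}_{S \mid X = x}(y)$. The marginal term is handled by total expectation plus Jensen, and the final step is KS $\leq$ TV together with Pinsker.

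The differences are cosmetic. You apply Jensen to $d_{KS}^2$ before Pinsker, whereas the paper applies Pinsker first and then Jensen to the concave square root; both give the same constant. The endpoint bookkeeping you flag is harmless: your half-line argument applies to every $u < 1$ attained by $G$, which includes all of $(0,1)$ by continuity, and at every other $u$ the CDFs $F_{\widehat{S} \mid X = x}$ and $U$ coincide.
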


\begin{theorem}[Deterministic Plug-in Conditional Coverage Bounds]  \label{thm:plugin_bounds}
Suppose Assumptions~\ref{ass:iid} and \ref{ass:plugin_cont_cond} hold. Then, for all $x \in \mathcal{X}$, the conditional coverage gap satisfies
\begin{align*}
\widehat{\Delta}(x) &\leq d_{KS}\left( F_{S \mid X = x}, \widehat{F}_{S \mid X = x} \right) + \mathbb{E}\left[ d_{KS}\left( F_{S \mid X}, \widehat{F}_{S \mid X} \right) \right] \\
\widehat{\Delta}(x)^2 &\leq D_{KL}\left( P_{S \mid X = x} \parallel \widehat{P}_{S \mid X = x} \right) + \mathbb{E}\left[ D_{KL}\left( P_{S \mid X} \parallel \widehat{P}_{S \mid X} \right) \right].
\end{align*}
\end{theorem}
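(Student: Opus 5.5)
The theorem follows by chaining two earlier results. Lemma~\ref{lem:ks_bound} applies to any measurable score under Assumption~\ref{ass:iid}. Lemma~\ref{lem:plugin_upper} bounds the resulting Kolmogorov--Smirnov distance for the plug-in score. So the plan is to apply Lemma~\ref{lem:ks_bound} to $\hat{s}$ in place of $s$, and then substitute the two upper bounds of Lemma~\ref{lem:plugin_upper}.

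\textbf{Step 1: applying Lemma~\ref{lem:ks_bound} to $\hat{s}$.} We have assumed the estimator $\widehat{F}_{S \mid X}$ is fixed and independent of $\mathcal{D}_n$, and we condition on it throughout. Under that conditioning, $\hat{s}$ is a deterministic measurable function of $(x, y)$. The calibration scores $\widehat{S}_i = \hat{s}(X_i, Y_i)$ and the test score $\widehat{S}$ are then i.i.d. by Assumption~\ref{ass:iid}. The prediction region built from $\hat{s}$ is exactly the split conformal region of Subsection~\ref{subsec:setting} with $s$ replaced by $\hat{s}$. Lemma~\ref{lem:ks_bound} therefore gives
\[
\widehat{\Delta}(x) \leq d_{KS}\left( F_{\widehat{S} \mid X = x}, F_{\widehat{S}} \right) \qquad \text{for all } x \in \mathcal{X}.
\]
One check is needed here. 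Lemma~\ref{lem:ks_bound} is stated without any continuity assumption, so continuity of $\widehat{S}$ is not required at this step. Assumption~\ref{ass:plugin_cont_cond} is needed only for Lemma~\ref{lem:plugin_upper}. If the proof of Lemma~\ref{lem:ks_bound} in Appendix~\ref{pf:ks_bound} did use Assumption~\ref{ass:cont}, I would instead reprove the bound directly. Conditionally on $\hat{q}_{1-\alpha}$, the conditional coverage at $x$ is $F_{\widehat{S} \mid X = x}(\hat{q}_{1-\alpha})$ and the marginal coverage is $F_{\widehat{S}}(\hat{q}_{1-\alpha})$. Taking expectations over $\mathcal{D}_n$, which is independent of the test point, bounds their difference by the supremum over $t$.

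\textbf{Step 2: substituting Lemma~\ref{lem:plugin_upper}.} Inequality \eqref{eq:ks_bound} bounds $d_{KS}(F_{\widehat{S} \mid X = x}, F_{\widehat{S}})$ by the KS expression in the theorem. Chaining it with Step 1 gives the first claim. For the second claim, Step 1 gives $\widehat{\Delta}(x)^2 \leq d_{KS}(F_{\widehat{S} \mid X = x}, F_{\widehat{S}})^2$, since both sides are nonnegative. Combining this with \eqref{eq:kl_bound} gives the KL statement. Both bounds hold for every $x$ with no probability qualifier, so the result is deterministic given the fixed estimator.

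\textbf{Main obstacle.} The only delicate point is the conditioning bookkeeping in Step 1. We must justify that Lemma~\ref{lem:ks_bound} can be invoked conditionally on the independently trained estimator. This means checking that conditioning on an object independent of $\mathcal{D}_{n+1}$ preserves the i.i.d. structure and that $\hat{s}$ is measurable. Both follow directly from the standard split between model fitting and calibration. Beyond this, the proof is a two-line composition of earlier results.
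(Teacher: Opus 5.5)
Your proposal is correct and follows the same route as the paper's proof. Both apply Lemma~\ref{lem:ks_bound} to the plug-in score $\hat{s}$ to get $\widehat{\Delta}(x) \leq d_{KS}(F_{\widehat{S} \mid X = x}, F_{\widehat{S}})$, then substitute the two bounds of Lemma~\ref{lem:plugin_upper}; your check that Lemma~\ref{lem:ks_bound} needs no continuity assumption is consistent with its proof, which uses only the independence of $\hat{q}_{1-\alpha}$ from the test point.
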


\bigbreak

The proofs of Lemma~\ref{lem:plugin_upper} and Theorem~\ref{thm:plugin_bounds} are provided in Appendix~\ref{pf:plugin_upper} and \ref{pf:plugin_bounds}, respectively.

\begin{remark}
The same inequality with the reverse Kullback--Leibler divergences also holds. As stated here with the forward Kullback--Leibler divergence, the bound is informative only when the estimated conditional distribution assigns sufficient mass wherever the true conditional distribution does, and thus heavier tails.
\end{remark}

\bigbreak

Besides a local error term, Equations~\eqref{eq:ks_bound} and~\eqref{eq:kl_bound} also contain an average error over the feature space, which propagates to the bounds in Theorem~\ref{thm:plugin_bounds}. This reflects the fact that although split conformal prediction provides global marginal validity, under-coverage on some regions of $\mathcal{X}$ must be compensated by over-coverage elsewhere, and conversely. This observation also aligns with the findings of \cite{braun2025conditional}.

\bigbreak

We also obtain $L^1$, $L^2$, as well as high-probability versions. This is especially useful because it connects conditional coverage control to standard maximum likelihood training objectives through the expected discrepancy between true and estimated conditional laws.

\begin{corollary}[Probabilistic Conditional Coverage Bound] \label{cor:exp_prob_bounds}
Suppose Assumptions~\ref{ass:iid} and \ref{ass:plugin_cont_cond} hold. Then, for any tolerance margin $\delta > 0$, the conditional coverage gap satisfies
\begin{equation} \label{eq:l1_l2_gap}
\mathbb{E}\left[ \widehat{\Delta}(X) \right] \leq 2\mathbb{E}\left[ d_{KS}\left( F_{S \mid X}, \widehat{F}_{S \mid X} \right) \right], \quad \mathbb{E}\left[ \widehat{\Delta}(X)^2 \right] \leq 2\mathbb{E}\left[ D_{KL}\left( P_{S \mid X} \parallel \widehat{P}_{S \mid X} \right) \right],
\end{equation}
\[
\mathbb{P}\left( \widehat{\Delta}(X) \geq \delta \right) \leq \frac{2\mathbb{E}\left[ d_{KS}\left( F_{S \mid X}, \widehat{F}_{S \mid X} \right) \right]}{\delta} \wedge \frac{2\mathbb{E}\left[ D_{KL}\left( P_{S \mid X} \parallel \widehat{P}_{S \mid X} \right) \right]}{\delta^2}.
\]
\end{corollary}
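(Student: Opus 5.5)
The plan is to integrate the pointwise bounds of Theorem~\ref{thm:plugin_bounds} over the test feature $X \sim P_X$ and then apply Markov's inequality. The test point $(X, Y)$ is independent of $\mathcal{D}_n$ and of the fixed estimator, so the deterministic inequalities, which hold for every $x \in \mathcal{X}$, can be evaluated at $x = X$. Taking expectations is then legitimate, since all quantities involved are nonnegative; measurability of $x \mapsto d_{KS}(F_{S \mid X = x}, \widehat{F}_{S \mid X = x})$ follows by restricting the supremum to rational $t$, using right-continuity of CDFs.

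For the $L^1$ bound, evaluate the first inequality of Theorem~\ref{thm:plugin_bounds} at $X$ and take expectations. The local term has expectation $\mathbb{E}[d_{KS}(F_{S \mid X}, \widehat{F}_{S \mid X})]$. The averaged term is a constant, because it is an expectation over an independent copy of $X$, and it equals the same quantity. Together they give the factor $2$. The $L^2$ bound follows in exactly the same way from the Kullback--Leibler inequality: the local term $D_{KL}(P_{S \mid X = x} \parallel \widehat{P}_{S \mid X = x})$ integrates to the constant term, which again yields $2\mathbb{E}[D_{KL}(P_{S \mid X} \parallel \widehat{P}_{S \mid X})]$. If the right-hand side is infinite, the inequality holds trivially.

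For the tail bound, apply Markov's inequality to the nonnegative variable $\widehat{\Delta}(X)$, which gives $\mathbb{P}(\widehat{\Delta}(X) \geq \delta) \leq \mathbb{E}[\widehat{\Delta}(X)]/\delta$. Separately, apply Markov's inequality to $\widehat{\Delta}(X)^2$ at level $\delta^2$, using that $\{\widehat{\Delta}(X) \geq \delta\} = \{\widehat{\Delta}(X)^2 \geq \delta^2\}$ because $\delta > 0$ and $\widehat{\Delta} \geq 0$. Substituting the two bounds of \eqref{eq:l1_l2_gap} and taking the minimum of the resulting estimates gives the stated $\wedge$.

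The only delicate point is interpretive rather than technical. One must make sure that $\widehat{\Delta}(X)$ means the gap function of Definition~\ref{def:coverage_gap} evaluated at the random test feature, with calibration data marginalized and the estimator held fixed. Otherwise the expectation of the second term in Theorem~\ref{thm:plugin_bounds} could be conflated with a random quantity. Once this convention is fixed, no step requires more than linearity of expectation and Markov's inequality.
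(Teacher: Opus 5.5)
Your proposal is correct and follows the paper's argument. You evaluate the pointwise bounds of Theorem~\ref{thm:plugin_bounds} at the independent test feature $X$ and integrate; the local term integrates to the same constant as the averaged term, which gives the factor $2$. You then apply Markov's inequality to $\widehat{\Delta}(X)$ at level $\delta$ and to $\widehat{\Delta}(X)^2$ at level $\delta^2$. Your remarks on measurability, and on reading $\widehat{\Delta}$ with $\mathcal{D}_n$ marginalized and the estimator held fixed, make explicit what the paper's two-line proof leaves implicit.
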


\bigbreak

The proof of Corollary~\ref{cor:exp_prob_bounds} is provided in Appendix~\ref{pf:exp_prob_bounds}.

\bigbreak

Taken together, Theorem~\ref{thm:plugin_bounds} and Corollary~\ref{cor:exp_prob_bounds} show that controlling the Kolmogorov--Smirnov distance or the forward Kullback--Leibler divergence probabilistically bounds the conditional coverage gap. Moreover, since many density estimators are trained by Maximum Likelihood Estimation (MLE, see Subsection~\ref{subsec:lik}), the forward Kullback--Leibler formulation gives a direct theoretical link between standard likelihood objectives and conditional validity. If the estimator $\widehat{P}_{S \mid X}$ satisfies
\[
\mathbb{E}\left[ D_{KL}\left( P_{S \mid X} \parallel \widehat{P}_{S \mid X} \right) \right] = O_P(\rho_N),
\]
for some sequence $\rho_N \to 0$ where $N$ denotes a training size used in the computation of $\widehat{P}_{S \mid X}$, then Corollary~\ref{cor:exp_prob_bounds} implies that $\widehat{\Delta}(X) = O_P(\sqrt{\rho_N})$, and thus converges in probability to zero, asymptotically achieving conditional validity with a known rate.

\begin{remark} \label{rmk:contraction}
Using the contraction of total variation distance under measurable mappings, both Theorem~\ref{thm:plugin_bounds} and Corollary~\ref{cor:exp_prob_bounds} admit analogous versions when replacing the Kolmogorov--Smirnov distance with the total variation distance and replacing $\widehat{P}_{S \mid X = x}$ with $\widehat{P}_{Y \mid X = x}$. Therefore, these bounds remain applicable in that setting with only minor modifications. Indeed, defining the measurable map $s_x : y \mapsto s(x, y)$, we have for all $x \in \mathcal{X}$
\begin{align*}
d_{KS}\left( F_{S \mid X = x}, \widehat{F}_{S \mid X = x} \right)
&\leq d_{TV}\left( P_{S \mid X = x}, \widehat{P}_{S \mid X = x} \right) \\
&= d_{TV}\left( (s_x)_\sharp P_{Y \mid X = x}, (s_x)_\sharp \widehat{P}_{Y \mid X = x} \right) \\
&\leq d_{TV}\left( P_{Y \mid X = x}, \widehat{P}_{Y \mid X = x} \right).
\end{align*}
\end{remark}

\subsection{Volumetric Bounds} \label{subsec:vol}

Of course, achieving conditional coverage is all but meaningless on its own, since trivial conformal prediction regions also attain a zero conditional coverage gap. Thus, we now compare the learned conformal prediction region with the oracle conditional prediction region generated by the original base score $s$. This makes the geometric meaning of a small conditional coverage gap explicit: if the transformed score is nearly pivotal, then the conformal prediction region must remain close, in a set-inclusion sense, to the oracle conditional sublevel set associated with the true conditional distribution, which we formally introduce below. To this end, we now require the conditional distributions of the score to be strictly increasing on their support.

\begin{assumption} \label{ass:increasing}
For all $x \in \mathcal{X}$, the conditional cumulative distribution function $F_{S \mid X = x}$ is strictly increasing on the interval $\left\{ t \in \mathbb{R} : 0 < F_{S \mid X = x}(t) < 1 \right\}$.
\end{assumption}

\begin{definition}[Oracle Conditional Prediction Region] \label{def:oracle_region}
Suppose Assumptions~\ref{ass:cont_cond} and \ref{ass:increasing} hold. Given a base nonconformity score $s$ and target miscoverage level $\beta \in (0, 1)$, the oracle conditional quantile at $x \in \mathcal{X}$ is uniquely defined by
\[
q_{1 - \beta}^*(x) \coloneqq  F_{S \mid X = x}^{-1}(1 - \beta),
\]
and the corresponding oracle conditional prediction region is defined by
\[
C_{1 - \beta}^*(x) \coloneqq \left\{ y \in \mathcal{Y} : s(x, y) \leq q_{1 - \beta}^*(x) \right\}.
\]
\end{definition}

\begin{remark}
We extend notation of Definition~\ref{def:oracle_region} so that $q_{1 - \beta}^*$ and $C_{1 - \beta}^*$ are defined for $\beta \in \mathbb{R}$, simply by truncating $\widetilde{\beta} = (\beta \vee 0) \wedge 1$, with $q_0^*(x) = -\infty$ and $q_1^*(x) = +\infty$ for all $x \in \mathcal{X}$.
\end{remark}

By construction, $C_{1 - \beta}^*(x)$ is the oracle element of the nested family of sublevel sets generated by $s$ having conditional coverage level $1 - \beta$. In other words, it is the prediction region one would obtain if the true conditional distribution of the base score were known.

\begin{lemma}[Oracle Inclusion Bounds] \label{lem:oracle_inclusion}
Suppose Assumptions~\ref{ass:iid}, \ref{ass:cont_cond} and \ref{ass:increasing} hold, and that ties in $\{ \widehat{S}_i \}_{i=1}^n$ occur with zero probability. For $\delta \in (0, 1)$, let
\[
\widehat{L}(x, n, \delta) = d_{KS}(F_{\widehat{S} \mid X = x}, F_{\widehat{S}}) + \sqrt{\frac{\log(2/\delta)}{2n}} + \frac{2}{n}.
\]

Then, for any miscoverage level $\alpha \in (0, 1)$, with probability at least $1 - \delta$ over $\mathcal{D}_n$, for all $x \in \mathcal{X}$
\[
C^*_{1 - \alpha - \widehat{L}(x, n, \delta)}(x) \subseteq \widehat{C}_{1 - \alpha}(x) \subseteq C^*_{1 - \alpha + \widehat{L}(x, n, \delta)}(x)
\]
\end{lemma}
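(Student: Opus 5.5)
The plan is to translate the empirical threshold $\hat q_{1-\alpha}$, computed on the plug-in scores $\widehat{S}_i = \hat s(X_i,Y_i)$, into a statement about the conditional law of the base score $S$ given $X=x$. This is possible because the monotone reparametrization preserves sublevel sets. Write $\widehat{F} \coloneqq F_{\widehat{S}}$ for the marginal CDF of the plug-in score and $\widehat{F}_x \coloneqq F_{\widehat{S}\mid X=x}$ for its conditional CDF. Since $\hat s(x,\cdot) = \widehat{F}_{S\mid X=x}(s(x,\cdot))$ is a nondecreasing function of $s(x,\cdot)$, the set $\widehat{C}_{1-\alpha}(x) = \{y : \hat s(x,y)\le \hat q_{1-\alpha}\}$ is a sublevel set of $s(x,\cdot)$. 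Because the family of sets $\{y : s(x,y)\le t\}$ is nested, the two inclusions will follow from comparing conditional probabilities. Concretely, let $p(x) \coloneqq \widehat{F}_x(\hat q_{1-\alpha}) = \mathbb{P}(Y\in \widehat{C}_{1-\alpha}(x)\mid X=x,\mathcal{D}_n)$. By Assumptions~\ref{ass:cont_cond} and \ref{ass:increasing}, $\widehat{C}_{1-\alpha}(x)$ is sandwiched as $C^*_{p(x)}\subseteq \widehat{C}_{1-\alpha}(x)\subseteq C^*_{p(x)}$, up to $P_{Y\mid X=x}$-null boundary pieces, since the oracle regions are exactly the sublevel sets of $s(x,\cdot)$ indexed by conditional coverage. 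It therefore suffices to show $|p(x)-(1-\alpha)|\le \widehat{L}(x,n,\delta)$ for all $x$ simultaneously, on one event of probability $1-\delta$. Monotonicity of $\beta\mapsto C^*_{1-\beta}(x)$, together with the truncation convention, then gives both inclusions.

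For the bound on $p(x)$, I would split it as
\[
|p(x)-(1-\alpha)| \le |\widehat{F}_x(\hat q_{1-\alpha}) - \widehat{F}(\hat q_{1-\alpha})| + |\widehat{F}(\hat q_{1-\alpha}) - (1-\alpha)|.
\]
The first term is at most $d_{KS}(\widehat{F}_x, \widehat{F})$ deterministically, for every $x$. For the second term, let $\widehat{F}_n$ be the empirical CDF of $\widehat{S}_1,\dots,\widehat{S}_n$. These are i.i.d. by Assumption~\ref{ass:iid}, with the estimator treated as fixed. By the Dvoretzky--Kiefer--Wolfowitz inequality with Massart's constant, with probability at least $1-\delta$ we have $\sup_t|\widehat{F}_n(t)-\widehat{F}(t)|\le\sqrt{\log(2/\delta)/(2n)}$. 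This event does not depend on $x$, which gives the uniformity over $\mathcal{X}$.

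The remaining step is to bound $|\widehat{F}_n(\hat q_{1-\alpha}) - (1-\alpha)|$. With no ties, $\hat q_{1-\alpha}$ is the $k$-th order statistic with $k=\lceil (n+1)(1-\alpha)\rceil$, so $\widehat{F}_n(\hat q_{1-\alpha}) = k/n$. Also $(n+1)(1-\alpha)\le k < (n+1)(1-\alpha)+1$, which gives $|k/n - (1-\alpha)| \le (1-\alpha)/n + 1/n \le 2/n$; this is the $2/n$ term.

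The case $k=n+1$, where $\hat q=+\infty$, needs separate handling. There $p(x)=1$ and $\widehat{C}=\mathcal{Y}$. Since $1-\alpha > n/(n+1)$, we get $1-p(x) < 1/(n+1)\le 2/n$, so the truncation at $\beta\le 0$ gives $C^*_1=\mathcal{Y}$ and the inclusions still hold.

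I expect the main obstacle to be the set-identification step: showing that $\widehat{C}_{1-\alpha}(x)$ equals an oracle region exactly, not just up to null sets. Because $\widehat{F}_{S\mid X=x}$ may be flat, the sublevel set $\{y : \hat s\le \hat q\}$ equals $\{y : s(x,y)\le t\}$ where $t$ is the largest $t$ with $\widehat{F}_{S\mid X=x}(t)\le \hat q$. I would define this $t$ directly. The strict monotonicity of $F_{S\mid X=x}$ then gives $t = q^*_{p(x)}(x)$ exactly, with $p(x)=F_{S\mid X=x}(t)$. For the inclusions themselves I would use the monotone ordering of quantiles, $q^*_{1-\alpha-\widehat{L}}\le t\le q^*_{1-\alpha+\widehat{L}}$, which follows from $|p(x)-(1-\alpha)|\le \widehat{L}$ by the continuity and strict monotonicity of $F_{S\mid X=x}$.
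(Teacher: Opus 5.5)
Your proposal is correct and follows essentially the same route as the paper: the same three-term decomposition into the Kolmogorov--Smirnov term, the Dvoretzky--Kiefer--Wolfowitz term (on one event not depending on $x$, which gives uniformity), and the order-statistic term $\vert k/n-(1-\alpha)\vert\le 2/n$, followed by the separate case $k=n+1$ and inversion of the strictly increasing $F_{S\mid X=x}$ at the effective base threshold. One small point, which the paper glosses over in the same way: this lemma does not assume continuity of $\widehat{F}_{S\mid X=x}$, so a largest $t$ with $\widehat{F}_{S\mid X=x}(t)\le \hat q_{1-\alpha}$ need not exist and $\widehat{C}_{1-\alpha}(x)$ may equal $\{y : s(x,y) < t\}$; this is harmless, because $k/n-(1-\alpha)\ge (1-\alpha)/n>0$ makes $p(x) > 1-\alpha-\widehat{L}(x,n,\delta)$ strict, so $C^*_{1-\alpha-\widehat{L}(x,n,\delta)}(x)\subseteq\{y : s(x,y)<t\}$ still holds.
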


\bigbreak

The proof of Lemma~\ref{lem:oracle_inclusion} is provided in Appendix~\ref{pf:oracle_inclusion}.

\bigbreak

Lemma~\ref{lem:oracle_inclusion} shows that a small conditional coverage gap yields more than approximate validity. Since all sets are sublevel sets of the same base score $s(x, \cdot)$, they have the same underlying geometry. The transform $\hat{s}$ does not change shape; it only selects a different level in the same nested family. In particular, the set inclusions imply that with probability at least $1 - \delta$ over $\mathcal{D}_n$
\[
\mathrm{Vol}\left( C^*_{1 - \alpha - \widehat{L}(x, n, \delta)}(x) \right) \leq \mathrm{Vol}\left( \widehat{C}_{1 - \alpha}(x) \right) \leq \mathrm{Vol}\left( C^*_{1 - \alpha + \widehat{L}(x, n, \delta)}(x) \right).
\]

Building on Lemma~\ref{lem:oracle_inclusion}, we now show that small conditional coverage gaps also entail asymptotic recovery of the oracle conditional region in symmetric difference, as detailed in Theorem~\ref{thm:sym_diff} below.

\begin{theorem}[Symmetric Difference Bound] \label{thm:sym_diff}
Suppose Assumptions~\ref{ass:iid}, \ref{ass:cont_cond}, \ref{ass:plugin_cont_cond}, and \ref{ass:increasing} hold, and that ties in $\{ \widehat{S}_i \}_{i=1}^n$ occur with zero probability. Then, for any miscoverage level $\alpha \in (0, 1)$, the symmetric difference satisfies
\begin{align*}
\mathbb{P}\left( Y \in \widehat{C}_{1 - \alpha}(X) \symdif C^*_{1 - \alpha}(X) \right) &\leq \inf_{\delta \in (0, 1)} \left\{ 2\mathbb{E}\left[ \widehat{L}(X, n, \delta) \right] + \delta \right\} \\
&\leq 4\mathbb{E}\left[ d_{KS}\left( F_{S \mid X}, \widehat{F}_{S \mid X} \right) \right] + O\left( \sqrt{\frac{\log(n)}{n}} \right) \\
&\leq 3\mathbb{E}\left[ D_{KL}\left( P_{S \mid X} \parallel \widehat{P}_{S \mid X} \right) \right]^{1/2} + O\left( \sqrt{\frac{\log(n)}{n}} \right).
\end{align*}
\end{theorem}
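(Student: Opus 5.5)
Fix $\alpha\in(0,1)$ and $\delta\in(0,1)$. Let $G$ be the event of probability at least $1-\delta$ over $\mathcal{D}_n$ on which Lemma~\ref{lem:oracle_inclusion} gives, simultaneously for all $x$,
\[
C^*_{1-\alpha-\widehat L(x,n,\delta)}(x)\subseteq \widehat C_{1-\alpha}(x)\subseteq C^*_{1-\alpha+\widehat L(x,n,\delta)}(x).
\]
Since the oracle regions are nested in the level, $C^*_{1-\alpha}(x)$ lies between the same two sets. Hence on $G$,
\[
\widehat C_{1-\alpha}(x)\symdif C^*_{1-\alpha}(x)\subseteq C^*_{1-\alpha+\widehat L}(x)\setminus C^*_{1-\alpha-\widehat L}(x).
\]
Because $(X,Y)$ is independent of $\mathcal{D}_n$ and $\widehat L(x,n,\delta)$ does not depend on $\mathcal{D}_n$ (it involves only the fixed estimator and $P_{XY}$), I would condition on $X=x$. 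Assumptions~\ref{ass:cont_cond} and \ref{ass:increasing} give $\mathbb{P}(Y\in C^*_{1-\beta}(x)\mid X=x)=1-\beta$ for $\beta\in[0,1]$, and truncation only shrinks the gap. The conditional mass of the annulus is therefore at most $2\widehat L(x,n,\delta)$. On $G^c$ I bound the probability by $1$. Integrating over $X$ and $\mathcal{D}_n$ gives
\[
\mathbb{P}\bigl(Y\in\widehat C_{1-\alpha}(X)\symdif C^*_{1-\alpha}(X)\bigr)\le 2\mathbb{E}[\widehat L(X,n,\delta)]+\delta,
\]
and taking the infimum over $\delta$ yields the first line. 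The one subtle point is making sure the event $G$ is uniform in $x$ and independent of the test point, which is exactly what Lemma~\ref{lem:oracle_inclusion} states.

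For the second line, I plug Lemma~\ref{lem:plugin_upper}, i.e.\ \eqref{eq:ks_bound}, into $\widehat L$. This needs Assumption~\ref{ass:plugin_cont_cond}. Writing $e(x)=d_{KS}(F_{S\mid X=x},\widehat F_{S\mid X=x})$, it gives $\mathbb{E}[d_{KS}(F_{\widehat S\mid X},F_{\widehat S})]\le 2\mathbb{E}[e(X)]$. Hence
\[
2\mathbb{E}[\widehat L]\le 4\mathbb{E}[e(X)]+2\sqrt{\tfrac{\log(2/\delta)}{2n}}+\tfrac4n.
\]
Choosing $\delta=1/\sqrt n$ (or $1/n$) makes $2\sqrt{\log(2/\delta)/(2n)}+4/n+\delta=O(\sqrt{\log n/n})$.

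For the third line, I use Pinsker on each conditional law, $e(x)\le d_{TV}\le\sqrt{D_{KL}(P_{S\mid X=x}\parallel\widehat P_{S\mid X=x})/2}$, followed by Jensen. This gives $4\mathbb{E}[e(X)]\le 4\sqrt{\mathbb{E}[D_{KL}]/2}=2\sqrt2\,\mathbb{E}[D_{KL}]^{1/2}\le 3\,\mathbb{E}[D_{KL}]^{1/2}$.

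The main obstacle is bookkeeping rather than any deep step. The first is checking that the conditional coverage identity for $C^*$ holds with truncation at the endpoints $\beta\notin(0,1)$, where the bound $\le 2\widehat L$ still holds. The second is confirming the Pinsker constant convention, whether $d_{KS}^2\le D_{KL}$ as in \eqref{eq:kl_bound} or with the sharper factor $1/2$. With the cruder convention $e\le\sqrt{D_{KL}}$ one would get the constant $4$ rather than $3$, so I would use the sharp Pinsker inequality $d_{TV}\le\sqrt{D_{KL}/2}$ to reach $2\sqrt2\le3$.
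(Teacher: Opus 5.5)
Your proposal is correct and follows the paper's proof essentially step for step. You invoke Lemma~\ref{lem:oracle_inclusion} on a $\mathcal{D}_n$-event of probability at least $1-\delta$ and use nestedness to trap the symmetric difference in the annulus $C^*_{1-\alpha+\widehat{L}}(x)\setminus C^*_{1-\alpha-\widehat{L}}(x)$, whose conditional mass is at most $2\widehat{L}(x,n,\delta)$; you pay $\delta$ on the complement, then apply Lemma~\ref{lem:plugin_upper} with $\delta\asymp n^{-1/2}$ and Pinsker plus Jensen to reach $2\sqrt{2}\le 3$. Your worry about the Pinsker convention is moot: applying \eqref{eq:kl_bound} directly to $\mathbb{E}[d_{KS}(F_{\widehat{S}\mid X},F_{\widehat{S}})]$ and then Jensen already gives $2\sqrt{2}\,\mathbb{E}[D_{KL}]^{1/2}$, which is how the paper concludes.
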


\bigbreak

The proof of Theorem~\ref{thm:sym_diff} is provided in Appendix~\ref{pf:sym_diff}.

\bigbreak

These results guarantee that the geometry of the conformal prediction region is controlled by the conditional coverage gap in a way akin to Theorem~\ref{thm:plugin_bounds}. As such, small errors imply closeness in the size of the resulting prediction region, whether it be in probability or in volume with respect to the Lebesgue measure. Specifically, Theorem~\ref{thm:sym_diff} bounds the symmetric difference directly through the expected forward Kullback--Leibler divergence, further reinforcing the link with standard MLE objectives discussed in Subsection~\ref{subsec:lik}.

\section{Estimation Algorithms} \label{sec:algorithms}

We now specialize our framework to the estimation of the conditional distribution of a fixed base score $S \coloneqq s(X, Y)$. This reduces the problem to one-dimensional conditional density estimation in $S$, regardless of the dimensionality of $\mathcal{Y}$, which is a key practical advantage over methods requiring estimation of the full conditional distribution.

Classical nonparametric methods such as conditional kernel density estimation \citep{hansen2004nonparametric} can in principle be applied, but they are sensitive to bandwidth selection, scale poorly with the dimension of the feature space $\mathcal{X}$, and require access to the full dataset at prediction time. While kernel-based procedures can nevertheless achieve minimax-optimal rates under suitable smoothness assumptions \citep{li2022minimax}, they remain of limited practical use in high-dimensional settings. We therefore also focus on more recent estimation approaches, and in particular on likelihood-based estimators that align with our Kullback--Leibler bounds, namely Mixture Density Networks and Conditional Normalizing Flows. More generally, our framework does not restrict the choice of conditional density estimator, and other parametric or nonparametric approaches could also be considered.

Furthermore, we emphasize that when the corrected calibration scores are stored, a key advantage of our PIT-CP procedure over CQR or even RCP is that it produces corrected conformal prediction regions for any miscoverage level $\alpha \in (0, 1)$ with linear $O(n)$ complexity, requiring only the computation of an empirical quantile $\hat{q}_{1 - \alpha}$.

\bigbreak

In the following section, we assume that $\mathcal{X} \subseteq \mathbb{R}^p$ and $\mathcal{Y} \subseteq \mathbb{R}^d$ for some $p, d \geq 1$. Also, let $\mathcal{D}_N' \coloneqq \left\{ (X_i',Y_i') \right\}_{i=1}^N$ be an auxiliary i.i.d. training sample independent of the calibration data $\mathcal{D}_n$, and define $S_i' \coloneqq s(X_i', Y_i')$ for all $i \in \llbracket N \rrbracket$. To ensure the existence of conditional densities, we further assume that for all $x \in \mathcal{X}$, the conditional distribution of $S \mid X = x$ is absolutely continuous with respect to the Lebesgue measure, thereby satisfying Assumption~\ref{ass:cont_cond}, and we let $\{ p(\cdot \mid x) : x \in \mathcal{X} \}$ denote the corresponding set of conditional densities.

\subsection{Minimax-Optimal Kernel-Based Estimators} \label{subsec:minimax}

Although such approaches may not be the most practical from a computational perspective, \cite{li2022minimax} detail a kernel-based procedure that achieves minimax-optimal rates for conditional density estimation under the expected total variation loss, over suitable smoothness classes. This result provides a useful theoretical benchmark for our setting. In this subsection, we also require the following boundedness assumption. For the sake of simplicity, we take a unit bound without loss of generality.

\begin{assumption}[Bounded Random Variables] \label{ass:bounded}
The random variables $X \in \mathbb{R}^p$ and $S \in \mathbb{R}$ satisfy $X \in [0, 1]^p$ and $S \in [0, 1]$ almost surely.
\end{assumption}

The difficulty of conditional density estimation is mostly governed by both the smoothness properties of the underlying distribution and the dimensions of the input and output spaces. In our framework, the input variable is $p$-dimensional, while the response variable is one-dimensional, since we focus on estimating the conditional density of the nonconformity score $S \coloneqq s(X, Y)$. As such, the estimation problem falls into a $(p, 1)$-dimensional conditional density estimation setting, which may be substantially easier than the estimation of the full conditional density of the target.

We now introduce two formal notions of smoothness for the conditional distribution. The first measures the smoothness of $P_{S \mid X = x}$ uniformly over the features $x \in [0, 1]^p$, whereas the latter quantifies the smoothness (\textit{i.e.}, the rate of change) of the mapping $x \mapsto P_{S \mid X = x}$. These definitions will be used to characterize the statistical complexity of the problem and to state minimax rates.

\begin{assumption}[H\"older Smooth Conditional Density] \label{ass:holder_smooth}
The conditional densities $\{ p(\cdot \mid x) : x \in [0, 1]^p \}$ are $\beta$-H\"older smooth, \textit{i.e.}, there exist $\beta > 0$ and $W_1 > 0$ such that for all $s, s' \in [0, 1]$ and for all $x \in [0, 1]^p$, the function $t \mapsto p(t \mid x)$ is $\ell \coloneqq \lfloor \beta \rfloor$ times differentiable and satisfies
\[
\left\vert \frac{\partial^{\ell}}{\partial s^{\ell}} p(s \mid x) - \frac{\partial^{\ell}}{\partial(s')^{\ell}} p(s' \mid x) \right\vert \leq W_1 \vert s - s' \vert^{\beta - \ell}.
\]
\end{assumption}

\begin{assumption}[Total Variation Smoothness] \label{ass:tv_smooth}
The conditional distributions $\{ P_{S \mid X = x} : x \in [0, 1]^p \}$ are $\gamma$-total variation smooth, \textit{i.e.}, there exist $0 < \gamma \leq 1$ and $W_2 > 0$ such that for all $(x, x') \in [0, 1]^p \times [0, 1]^p$,
\[
d_{TV}\left( P_{S \mid X = x}, P_{S \mid X = x'} \right) \leq W_2 \Vert x - x' \Vert_1^{\gamma}.
\]
\end{assumption}

\bigbreak

Under these assumptions, Theorem~3.5 (or 4.1 for the lower bound) of \cite{li2022minimax} can be invoked to derive minimax-optimal rates for conditional density estimation under the expected total variation distance. Combining this estimator with Corollary~\ref{cor:exp_prob_bounds} and Theorem~\ref{thm:sym_diff} yields the convergence guarantees stated in Theorem~\ref{thm:minimax_high_prob_bound}.

\begin{theorem}[Minimax Optimal Conditional Density Estimator] \label{thm:minimax_high_prob_bound}
Suppose Assumptions~\ref{ass:iid}, \ref{ass:cont_cond}, \ref{ass:bounded}, \ref{ass:holder_smooth}, and \ref{ass:tv_smooth} hold. Then, there exists a conditional density estimator satisfying $(\mathrm{i})$, and if Assumption~\ref{ass:increasing} also holds and ties in $\{ \widehat{S}_i \}_{i=1}^n$ occur with zero probability, then $(\mathrm{ii})$ is satisfied, where:
\begin{enumerate}[label=(\roman*), noitemsep]
    \item $\widehat{\Delta}(X) = O_P\left( N^{-\frac{1}{1/\beta + p/\gamma + 2}} \right)$.
    \item $\mathbb{P}\left( Y \in \widehat{C}_{1 - \alpha}(X) \symdif C^*_{1 - \alpha}(X) \right) = O_P\left( N^{-\frac{1}{1/\beta + p/\gamma + 2}} + \sqrt{\frac{\log(n)}{n}} \right)$.
\end{enumerate}
\end{theorem}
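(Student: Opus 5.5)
We plan to plug the estimator of \cite{li2022minimax} into Corollary~\ref{cor:exp_prob_bounds} and Theorem~\ref{thm:sym_diff}, passing from the Kolmogorov--Smirnov distance to the total variation distance.

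\textbf{Step 1: the estimator and its rate.} Apply Theorem~3.5 of \cite{li2022minimax} to the auxiliary sample $\{(X_i', S_i')\}_{i=1}^N$, which is i.i.d. and independent of $\mathcal{D}_n$. This is a $(p,1)$-dimensional conditional density estimation problem, and Assumptions~\ref{ass:bounded}, \ref{ass:holder_smooth} and \ref{ass:tv_smooth} are exactly its regularity hypotheses. The theorem gives a kernel-type estimator $\widehat{p}(\cdot \mid x)$ with
\[
\mathbb{E}_{\mathcal{D}_N'}\,\mathbb{E}_X\left[ d_{TV}\left( P_{S \mid X}, \widehat{P}_{S \mid X} \right) \right] \lesssim N^{-\frac{1}{1/\beta + p/\gamma + 2}}.
\]
We first check that the rate exponent of \cite{li2022minimax}, specialised to a one-dimensional response, matches this form. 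We also need $\widehat{P}_{S\mid X=x}$ to be a genuine probability distribution with a density. This makes $\widehat{F}_{S\mid X=x}$ continuous, so Assumption~\ref{ass:plugin_cont_cond} holds. If the raw estimator can be negative, we project it onto densities, for example by taking its positive part and renormalising. This projection at most doubles the $L^1$ error and so preserves the rate. We expect this to be the main technical obstacle.

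\textbf{Step 2: conditional coverage gap.} Recall that $d_{KS}(F,\widehat F)\le d_{TV}(P,\widehat P)$ (Remark~\ref{rmk:contraction}). Combined with the $L^1$ bound of Corollary~\ref{cor:exp_prob_bounds}, conditionally on $\mathcal{D}_N'$ this gives
\[
\mathbb{E}\left[\widehat{\Delta}(X)\mid\mathcal{D}_N'\right]\le 2\,\mathbb{E}_X\left[d_{TV}\left(P_{S\mid X},\widehat{P}_{S\mid X}\right)\right] \eqqcolon 2 R_N.
\]
By Step 1 and Markov's inequality over $\mathcal{D}_N'$, $R_N = O_P(r_N)$ with $r_N \coloneqq N^{-1/(1/\beta+p/\gamma+2)}$. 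A second Markov step, now over $X$, gives $\mathbb{P}(\widehat{\Delta}(X)\ge M r_N)\le \mathbb{P}(R_N > K r_N) + 2K/M$. Choosing $K$ and then $M$ large yields (i).

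\textbf{Step 3: symmetric difference.} Under the additional Assumption~\ref{ass:increasing} and the no-ties condition, all hypotheses of Theorem~\ref{thm:sym_diff} hold. Its second inequality bounds the symmetric difference probability by $4\,\mathbb{E}[d_{KS}(F_{S\mid X},\widehat F_{S\mid X})] + O(\sqrt{\log(n)/n})$. Here the $O(\sqrt{\log(n)/n})$ term comes from choosing $\delta = 1/n$ in $\widehat{L}$; it is deterministic and does not depend on the estimator. Bounding the first term by $4R_N = O_P(r_N)$ gives (ii).
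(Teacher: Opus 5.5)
Your proposal is correct and follows essentially the same route as the paper: invoke Theorem~3.5 of \cite{li2022minimax}, pass from $d_{KS}$ to $d_{TV}$, apply Markov's inequality through Corollary~\ref{cor:exp_prob_bounds} for (i), and use Theorem~\ref{thm:sym_diff} for (ii). Two of your additions are small rigor refinements rather than a different approach: projecting $\hat{p}$ onto genuine densities (positive part and renormalisation, which at most doubles the $L^1$ error), a point the paper does not address when it defines $\widehat{F}_{S \mid X = x}$ by integrating $\hat{p}$; and running the Markov argument in two stages conditional on $\mathcal{D}_N'$.
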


The proof of Theorem~\ref{thm:minimax_high_prob_bound} is provided in Appendix~\ref{pf:minimax_high_prob_bound}.

\bigbreak

While these upper bounds are achieved by a hybrid construction combining kernel and histogram estimators, and such estimators are generally not recommended in practice due to their poor empirical performance in many settings, they remain convenient and provide a clean insight into how the bound scales with the smoothness parameters $\beta$ and $\gamma$. More specifically, $\beta$ uniformly controls the smoothness of each conditional density in the score variable, whereas $\gamma$ controls how smoothly the conditional law varies with $x \in [0, 1]^p$. In particular, $\beta$ may be arbitrarily large, and larger values correspond to an easier conditional density estimation problem.

\subsection{Likelihood-Based Estimation and Forward Kullback--Leibler Divergence} \label{subsec:lik}

To estimate $P_{S \mid X}$, a common strategy is to use Maximum Likelihood Estimation within a parametric, but sufficiently rich, family $\{ p_\theta(\cdot \mid \cdot) : \theta \in \Theta \}$. When the model is sufficiently expressive, the Maximum Likelihood Estimator yields a model close to the true conditional distribution in this divergence. To see why this choice is especially natural in our framework, we relate the maximum likelihood procedure to the expected forward Kullback--Leibler divergence appearing in Corollary~\ref{cor:exp_prob_bounds}.

\bigbreak

Assume $\mathbb{E}\left[ \left\vert \log p(S \mid X) \right\vert \right] \vee \mathbb{E}\left[ \left\vert \log p_\theta(S \mid X) \right\vert \right] < +\infty$ for all $\theta \in \Theta$, and $P_{S \mid X = x} \ll P_{\theta \mid X = x}$. Letting
\[
\mathcal{L}(\theta) \coloneqq \mathbb{E}\left[ \log p_\theta(S \mid X) \right],
\]
the entropy-divergence decomposition is given by
\begin{align*}
\mathcal{L}(\theta)
&= \mathbb{E}\left[ \log p_\theta(S \mid X) \right] \\
&= \mathbb{E}\left[ \log p(S \mid X) \right] - \mathbb{E}\left[ \log \frac{p(S \mid X)}{p_\theta(S \mid X)} \right] \\
&= \mathbb{E}\left[ \log p(S \mid X) \right] - \mathbb{E}\left[ D_{KL}\left( P_{S \mid X} \parallel P_{\theta \mid X} \right) \right].
\end{align*}

Thus maximizing $\mathcal{L}(\theta)$ is equivalent to minimizing the expected forward Kullback--Leibler divergence
\[
\operatorname*{\arg\max}_{\theta \in \Theta} \mathcal{L}(\theta) = \operatorname*{\arg\min}_{\theta \in \Theta} \mathbb{E}\left[ D_{KL}\left(P_{S \mid X} \parallel P_{\theta \mid X}\right) \right].
\]

It is important to emphasize, however, that this correspondence is specific to the forward Kullback--Leibler divergence $\mathbb{E}\left[ D_{KL}\left( P_{S \mid X} \parallel P_{\theta \mid X} \right) \right]$. By contrast, the reverse divergence $\mathbb{E}\left[ D_{KL}\left( P_{\theta \mid X} \parallel P_{S \mid X} \right) \right]$ is in general not equivalent to the $\log$-likelihood objective, and may behave quite differently, but may be useful to consider nonetheless.

The empirical Maximum Likelihood Estimator is likewise the empirical risk minimization counterpart
\[
\hat{\theta} \in \operatorname*{\arg\max}_{\theta \in \Theta} \frac{1}{N}\sum_{i=1}^N \log p_\theta(S_i' \mid X_i').
\]

Hence, likelihood-based estimation directly aligns with our theoretical bounds. We now detail two state-of-the-art, MLE-based conditional density estimators, which can be readily used in our framework.

\subsection{Mixture Density Networks}

A historical choice to model the conditional distribution $P_{S \mid X = x}$ for all $x \in \mathbb{R}^p$ is to use a Mixture Density Network (MDN) \citep{bishop1994mixture}, defined by
\[
p_\theta(s \mid x) = \sum_{j=1}^m \pi_j(x; \theta) \mathcal{N}\left( s; \mu_j(x; \theta), \sigma_j^2(x; \theta) \right),
\]
with a fixed number of components $m \geq 1$, and class proportions in the unit simplex satisfying $\pi_j(x; \theta) \geq 0$ and $\sum_{j=1}^m \pi_j(x; \theta) = 1$.

Mixture Density Networks offer a flexible parametric framework that can capture multimodal and heteroscedastic conditional distributions, extending Gaussian Mixture Models to parametric mixtures indexed by $x \in \mathcal{X}$. When the number of components $m$ grows, MDNs become increasingly expressive and can approximate a wider class of conditional densities, potentially arbitrarily well under suitable regularity assumptions on the target distribution and sufficient capacity of the neural network parametrization. For instance, under fairly restrictive assumptions, one may also obtain explicit convergence rates in terms of the expected forward Kullback--Leibler divergence \citep{mendes2011convergence}.

In our setting, a noteworthy advantage of MDNs is that the induced conditional CDF is available in closed form, without resorting to numerical integration or Monte Carlo sampling. Indeed, given a parameter estimate $\hat{\theta}$, and letting $\Phi$ be the CDF of a standard normal distribution, one obtains
\[
\widehat{F}_{S \mid X = x}(s) = \sum_{j=1}^m \pi_j(x; \hat{\theta}) \Phi\left( \frac{s - \mu_j(x; \hat{\theta})}{\sigma_j(x; \hat{\theta})} \right).
\]

The plug-in corrected score is therefore
\[
\hat{s}(x, y) = \widehat{F}_{S \mid X = x}\left(s(x, y)\right),
\]
which can be evaluated directly without numerical approximation. The inverse can also be approximated through numerical inversion.

\bigbreak

We thus obtain the PIT-CP-MDN procedure summarized in Algorithm~\ref{alg:pit_cp_mdn}. As previously noted, as our procedure uses the split conformal prediction framework, it only requires storing the transformed scores $\{ U_i \}_{i=1}^n$. Because these scores are constructed independently of the target miscoverage rate $\alpha \in (0, 1)$, it is straightforward to generate predictions for multiple levels simultaneously without retraining.

\begin{algorithm}[H] 
\caption{PIT-CP-MDN}
\label{alg:pit_cp_mdn}
\begin{algorithmic}[1] 
\Require Training data $\mathcal{D}_N' = \left\{ (X_i', Y_i') \right\}_{i=1}^N$, calibration data $\mathcal{D}_n = \left\{ (X_i, Y_i) \right\}_{i=1}^n$, base score $s : \mathcal{X} \times \mathcal{Y} \to \mathbb{R}$, parametric MDN conditional densities $\{ p_\theta(\cdot \mid \cdot) : \theta \in \Theta \}$ with $m$ components, miscoverage level $\alpha \in (0, 1)$, test point $x \in \mathcal{X}$

\Statex
\Statex \textbf{Compute scores}
\State Compute $S_i' \coloneqq s(X_i', Y_i')$ for $i=1, \dots, N$, and $S_i \coloneqq s(X_i, Y_i)$ for $i=1, \dots, n$

\Statex
\Statex \textbf{Fit}
\State Compute or approximate the MLE $\hat{\theta} \in \operatorname*{\arg\max}_{\theta \in \Theta} \sum_{i=1}^N \log p_\theta(S_i' \mid X_i')$

\Statex
\Statex \textbf{Calibrate}
\State Compute $U_i \coloneqq \widehat{F}_{S \mid X = X_i}(S_i)$ for $i=1,\dots, n$
\State $\hat{q}_{1 - \alpha} \coloneqq$ the $\left\lceil (n + 1)(1 - \alpha) \right\rceil$-th order statistic of $\{ U_i \}_{i=1}^n \cup \{ +\infty \}$

\Statex
\State \Return $\widehat{C}_{1 - \alpha}(x) \coloneqq \left\{ y \in \mathcal{Y} : \widehat{F}_{S \mid X = x}\left( s(x, y) \right) \leq \hat{q}_{1 - \alpha} \right\}$
\end{algorithmic}
\end{algorithm}

\subsection{Conditional Normalizing Flows}

A second, more modern class of models is provided by Conditional Normalizing Flows (CNFs) \citep{papamakarios2021normalizing}. Conditional Normalizing Flows provide a highly flexible alternative to finite mixtures, since they do not impose any explicit parametric shape on the conditional distribution beyond the existence of a differentiable monotone transport to a simple base distribution. The strong appeal of Conditional Normalizing Flows has motivated the development of many structured architectures such as Continuous Normalizing Flows \citep{chen2018neural}, Neural Spline Flows \citep{durkan2019neural}, Masked Autoregressive Flows \citep{papamakarios2017masked}, Sum-of-Squares Polynomial Flows \citep{jaini2019sum}, RealNVP \citep{dinh2016density}, and Bernstein Polynomial Flows \citep{arpogaus2021probabilistic, arpogaus2023short}, each with respective strengths and drawbacks. 

In the general multivariate setting, additional structural choices also become important, and triangular transports such as Knothe--Rosenblatt transformations \citep{rosenblatt1952remarks} play a central role in simplifying the computation of the $\log$-determinant of the flow's Jacobian. In our framework, however, this distinction is irrelevant because the target variable is one-dimensional.

Indeed, in our case the flow only needs to learn, for each $x \in \mathbb{R}^p$, a strictly increasing differentiable transport $f_\theta(\cdot \mid x) : \mathbb{R} \to \mathbb{R}$, mapping $s \in \mathbb{R}$ to $z \in \mathbb{R}$ through $z \coloneqq f_\theta(s \mid x)$. Equipping the codomain with a strictly positive base probability density $p_z$, and by a change of variables, the model induces the conditional density
\[
p_\theta(s \mid x) = p_z\left( f_\theta(s \mid x) \right) \left\vert \frac{\partial}{\partial s} f_\theta(s \mid x) \right\vert.
\]

\bigbreak

In addition, to further simplify the procedure and avoid explicitly mapping $z \in \mathbb{R}$ into the probability space $[0, 1]$, we state Lemma~\ref{lem:circ_invariance}, which shows that conformal prediction regions remain invariant under any fixed strictly increasing transformation.

\begin{lemma}[Strictly Increasing Composition Invariance] \label{lem:circ_invariance}
The split conformal prediction procedure described in Subsection~\ref{subsec:setting} is invariant under composing the nonconformity score function $s$ by any strictly increasing function $g : \mathbb{R} \to \mathbb{R}$, yielding $\tilde{s} = g \circ s$.
\end{lemma}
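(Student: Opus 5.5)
The plan is to show directly that the prediction region built from $\tilde{s} = g \circ s$ coincides, as a set, with the region built from $s$, for every calibration sample, every $\alpha \in (0, 1)$ and every test point $x$. The argument rests on one fact: a strictly increasing $g$ preserves order in both directions, so $a \leq b \iff g(a) \leq g(b)$ for all real $a, b$. To handle the $+\infty$ appended to the calibration scores, extend $g$ by $g(+\infty) \coloneqq +\infty$; this keeps $g$ strictly increasing on $\mathbb{R} \cup \{+\infty\}$.

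The first step is to show that $g$ commutes with order statistics. Write $\tilde{S}_i = g(S_i)$. Because $g$ is strictly increasing, the permutation that sorts $\{S_i\}_{i=1}^n \cup \{+\infty\}$ also sorts $\{\tilde{S}_i\}_{i=1}^n \cup \{+\infty\}$. Ties are preserved as well, since $g$ is injective. It follows that the $k$-th order statistic of the transformed multiset equals $g$ applied to the $k$-th order statistic of the original, so $\tilde{q}_{1-\alpha} = g(\hat{q}_{1-\alpha})$ with the same $k = \lceil (n+1)(1-\alpha) \rceil$. The case $k = n+1$ gives $+\infty$ on both sides by the convention above.

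The second step is to compare sublevel sets. For any $y \in \mathcal{Y}$ we have
\[
g(s(x,y)) \leq g(\hat{q}_{1-\alpha}) \iff s(x,y) \leq \hat{q}_{1-\alpha}.
\]
This holds by strict monotonicity when $\hat{q}_{1-\alpha}$ is finite, and trivially when it is $+\infty$. Hence $\{y : \tilde{s}(x,y) \leq \tilde{q}_{1-\alpha}\} = \widehat{C}_{1-\alpha}(x)$, so the two procedures return identical regions. Every coverage statement, whether marginal, conditional or in terms of the gap $\Delta$, is therefore unchanged.

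No step is really hard. The only point needing care is bookkeeping: the $+\infty$ convention, and the fact that the result is an exact set identity holding pointwise in the data, not merely in distribution. Injectivity of $g$ also means that Assumption~\ref{ass:cont} about ties transfers to the transformed scores. This justifies replacing the PIT correction $\widehat{F}_{S \mid X = x}$ by $f_\theta(\cdot \mid x)$, since the former equals $F_z \circ f_\theta(\cdot \mid x)$ with $F_z$ strictly increasing when $p_z > 0$.
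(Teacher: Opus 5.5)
Your proposal is correct and follows the paper's proof essentially step for step: $g$ commutes with the $k$-th order statistic, and strict monotonicity gives $g(s(x,y)) \leq g(S_{(k)}) \Longleftrightarrow s(x,y) \leq S_{(k)}$, so the two regions coincide. The only cosmetic difference is that you absorb the case $k = n+1$ by setting $g(+\infty) \coloneqq +\infty$, whereas the paper handles that case separately as trivial.
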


The proof of Lemma~\ref{lem:circ_invariance} is provided in Appendix~\ref{pf:circ_invariance}.

\bigbreak

Through Lemma~\ref{lem:circ_invariance}, we obtain the PIT-CP-CNF procedure summarized below in Algorithm~\ref{alg:pit_cp_cnf}. The key idea is that conformal prediction sets are invariant under strictly increasing transformations of the score, so that the calibration can be performed directly in the latent $z$-space without loss of validity. This yields numerically stable computations and avoids an unnecessary mapping to $[0, 1]$. The same observations regarding score storage and the lack of a need for retraining apply here as well.

Indeed, for each $i \in \llbracket n \rrbracket$, we define the transformed calibration variables
\[
U_i \coloneqq F_z^{-1}\left( F_{\hat{\theta}}\left( S_i \mid X_i \right) \right),
\]
where $F_{\hat{\theta}}(\cdot \mid x)$ denotes the conditional cumulative distribution function associated with the learned density $p_{\hat{\theta}}(\cdot \mid x)$, and $F_z^{-1}$ is the inverse CDF of the base density $p_z$. By construction, $F_z^{-1}$ is continuous and strictly increasing, because $p_z$ admits a continuous density.

Moreover, by a change of variables, and since by assumption $f_{\hat{\theta}}$ is strictly increasing in its first argument, we may also equivalently write
\[
F_{\hat{\theta}}(s \mid x) = F_z\left( f_{\hat{\theta}}(s \mid x) \right) \implies F_z^{-1}\left( F_{\hat{\theta}}(s \mid x) \right) = f_{\hat{\theta}}(s \mid x).
\]

This identity allows the transformation to be applied directly through $f_{\hat{\theta}}$, avoiding the need to explicitly compute the intermediate probability integral transform in $[0, 1]$. Consequently, the inverse mapping can also be readily obtained through $f_{\hat{\theta}}^{-1}$.

\begin{algorithm}[H]
\caption{PIT-CP-CNF}
\label{alg:pit_cp_cnf}
\begin{algorithmic}[1]
\Require Training data $\mathcal{D}_N' = \left\{ (X_i', Y_i') \right\}_{i=1}^N$, calibration data $\mathcal{D}_n = \left\{ (X_i, Y_i) \right\}_{i=1}^n$, base score $s : \mathcal{X} \times \mathcal{Y} \to \mathbb{R}$, parametric CNF $f_\theta(\cdot \mid \cdot)$ and conditional densities $\{ p_\theta(\cdot \mid \cdot) : \theta \in \Theta \}$, miscoverage level $\alpha \in (0, 1)$, test point $x \in \mathcal{X}$
\Statex
\Statex \textbf{Compute scores}
\State Compute $S_i' \coloneqq s(X_i',Y_i')$ for $i=1,\dots,N$, and $S_i \coloneqq s(X_i,Y_i)$ for $i=1,\dots,n$
\Statex
\Statex \textbf{Fit}
\State Compute or approximate the MLE $\hat{\theta} \in \operatorname*{\arg\max}_{\theta \in \Theta} \sum_{i=1}^N \log p_\theta(S_i' \mid X_i')$
\Statex
\Statex \textbf{Calibrate}
\State Compute $U_i \coloneqq f_{\hat{\theta}}(S_i \mid X_i)$ for $i=1,\dots,n$
\State $\hat{q}_{1 - \alpha} \coloneqq$ the $\left\lceil (n + 1)(1 - \alpha) \right\rceil$-th order statistic of $\{ U_i \}_{i=1}^n \cup \{+\infty\}$
\Statex
\State \Return $\widehat{C}_{1 - \alpha}(x) \coloneqq \left\{ y \in \mathcal{Y} : f_{\hat{\theta}}\left( s(x, y) \mid x \right) \leq \hat{q}_{1 - \alpha} \right\}$
\end{algorithmic}
\end{algorithm}

\section{Numerical Experiments} \label{sec:num}

In this section, we investigate the empirical performance of our method through both controlled simulation studies and real-world data experiments. The simulation setting allows us to precisely control the data-generating assumptions and illustrate the invariance of our procedure with respect to the choice of the base nonconformity score. In addition, we consider a real-world regression task based on the SARCOS dataset \citep{vijayakumar2000sarcos}, where the objective is to assess performance under more complex and less structured conditions. Across all experiments, we compare our approach against a range of alternative methods, depending on their applicability in each setting.

\bigbreak

All simulations and figures reported in this section were produced using our companion \href{https://pypi.org/project/pitcp/}{\texttt{pitcp}} Python package \underline{version 0.10.0}, available on PyPI, which we developed to implement the two algorithms presented in Section~\ref{sec:algorithms}. This package is fully \texttt{scikit-learn}-compatible, and built upon the \href{https://github.com/probabilists/zuko}{\texttt{zuko}} framework \citep{rozet2022zuko}. Reproducible scripts used to generate the experiments are available in its parent \href{https://github.com/felixlaplante0/pitcp}{GitHub} repository \underline{under the \textit{paper} branch}, with minor numerical differences expected across platforms or hardware configurations, due to the nondeterministic behavior of some \texttt{PyTorch} operations.

\subsection{Simulation Study}

\subsubsection{Toy Example} \label{par:toy}

We first illustrate the PIT-CP procedure using a simple synthetic heteroscedastic regression setting inspired by \cite{guan2023localized}, but with smoother distribution changes with respect to the features in order to reflect Assumption~\ref{ass:tv_smooth}. The design is as follows. We consider one-dimensional features $X_i$ uniformly distributed on $[-1, 1]$ and response $Y_i$ generated conditionally on $X_i$ as
\[ 
Y_i \mid X_i = x \overset{\mathrm{i.i.d.}}{\sim} \mathcal{N}\left( 0, \sigma(x)^2 \right), \quad \sigma(x) = \vert 1 - 2x^2 \vert + \frac{1}{10}.
\]

This design gives a conditional distribution that exhibits strong heteroscedasticity as there is more than a $10$-fold increase in noise scale. It also provides the advantage of being easily visualizable, taking a candy shape as seen in Figure~\ref{fig:guan_simulation}. Although many different methods, including well-specified CQR, are capable of achieving conditional coverage in this toy example, we demonstrate that our PIT-CP procedure successfully achieves approximate conditional coverage regardless of the choice of the base nonconformity score and target miscoverage level $\alpha$, while still preserving the marginal coverage guarantee inherent to split conformal prediction.

\bigbreak

We train a Sum-of-Squares Polynomial Flow-based correction, following Algorithm~\ref{alg:pit_cp_cnf}, on $N = 5000$ training samples of three distinct base scores, each calibrated on $n = 1000$ data points at a different target miscoverage level $\alpha$. We report an approximate training time of $20$ seconds on an \textit{AMD Ryzen 5 5600} six-core processor, given $200$ epochs, a batch size of $512$ samples, and using the Adam optimizer with a learning rate of $10^{-3}$ \citep{kingma2014adam}. The selected scores and miscoverage levels are the absolute residual $s(x, y) = \vert y \vert$ for $\alpha = 0.3$, the oracle negative likelihood $s(x, y) = -\log p(y \mid x)$ for $\alpha = 0.2$, which corresponds to the Highest Predictive Density (oracle HPD-split) scenario, and the raw response $s(x, y) = y$ for $\alpha = 0.1$, yielding unilateral conformal intervals. Given that the data-generating process is symmetric around $y = 0$, we expect the PIT-corrected scores for the absolute residual (distance to the mean) and the HPD to coincide, provided the PIT-CP procedure is properly trained.

For each method, score, and miscoverage level $\alpha \in (0, 1)$, we then evaluate the Mean Absolute Error (MAE) between the achieved conditional coverage and the marginal coverage, which (up to finite-sample correction terms) is uniformly bounded across $\alpha \in (0, 1)$ by the $L^1$ conditional coverage gap $\mathbb{E}[ \widehat{\Delta}(X) ]$ that is theoretically controlled by the Kolmogorov--Smirnov distance in Equation~\eqref{eq:l1_l2_gap} of Corollary~\ref{cor:exp_prob_bounds}. We also compare to standard CQR with boosting-based quantile regressors using the CatBoost framework \citep{prokhorenkova2018catboost}.

Formally, considering the conditional coverage achieved by some procedure yielding a conformal prediction region $\widehat{C}_{1 - \alpha}(x)$ at a point $x \in [-1, 1]$ given a miscoverage level $\alpha \in [0, 1]$
\[
\hat{c}_{1 - \alpha}(x) \coloneqq \mathbb{P}\left( Y_{n + 1} \in \widehat{C}_{1 - \alpha}(X_{n + 1}) \mid \mathcal{D}_n, X_{n + 1} = x \right),
\]
the Mean Absolute Error can be obtained through
\[
\bar{c}_{1 - \alpha} = \frac{1}{2} \int_{-1}^1 \hat{c}_{1 - \alpha}(x) \, dx, \quad \mathrm{MAE} = \frac{1}{2} \int_{-1}^1 \left\vert \hat{c}_{1 - \alpha}(x) - \bar{c}_{1 - \alpha} \right\vert \, dx.
\]

\bigbreak

As illustrated in Figure~\ref{fig:guan_simulation}, the initial conformal prediction regions (blue) vary drastically in shape and conditional coverage depending on the chosen score. Applying the PIT correction (green) reshapes these regions to match the conditional data distribution. The corrected conditional coverage curves tightly fluctuate around each specified target coverage $1 - \alpha$ (black) and drastically reduce the Mean Absolute Error. This confirms that the PIT-based correction effectively mitigates the conditional coverage fluctuations induced by the original nonconformity scores, even surpassing CQR in all cases, although this could be solely explained by the choice of the quantile regressors. This empirical experiment demonstrates that the PIT-CP framework is fully agnostic to the base score, retains its geometry, maintains marginal coverage over $X$, and lastly does not need separate models for different quantiles. By contrast, while CQR also provides approximate conditional validity, it remains less flexible, particularly in multidimensional settings: it is restricted to bilateral conformal prediction regions and requires retraining to generalize across different miscoverage levels.

\begin{figure}[H]
    \centering
    \begin{subfigure}{0.32\textwidth}
        \centering
        \includegraphics[width=1.02\textwidth, trim=7 7 7 7, clip]{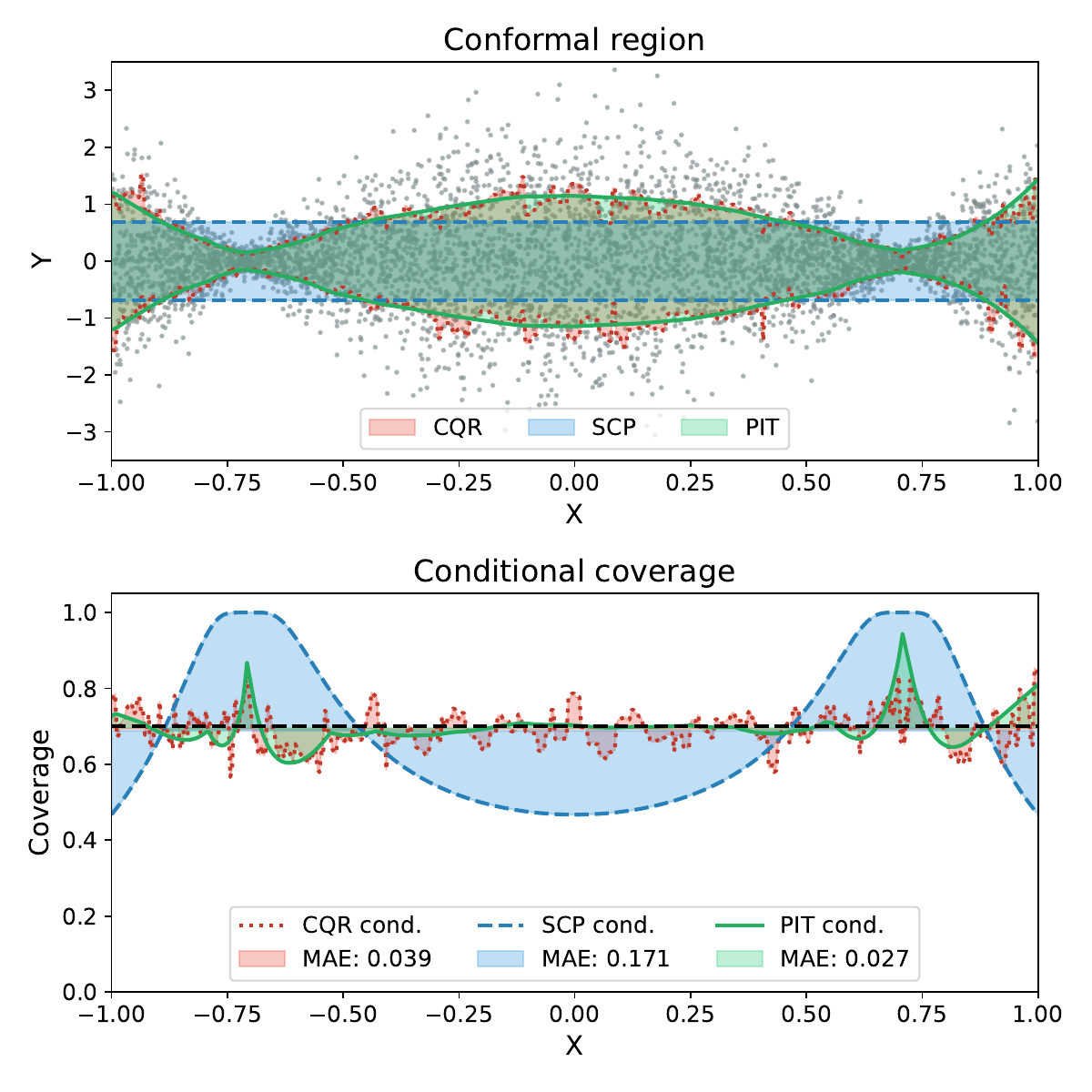}
        \caption{$s(x, y) = \vert y \vert, \, \alpha = 0.3$}
    \end{subfigure}
    \begin{subfigure}{0.32\textwidth}
        \centering
        \includegraphics[width=1.02\textwidth, trim=7 7 7 7, clip]{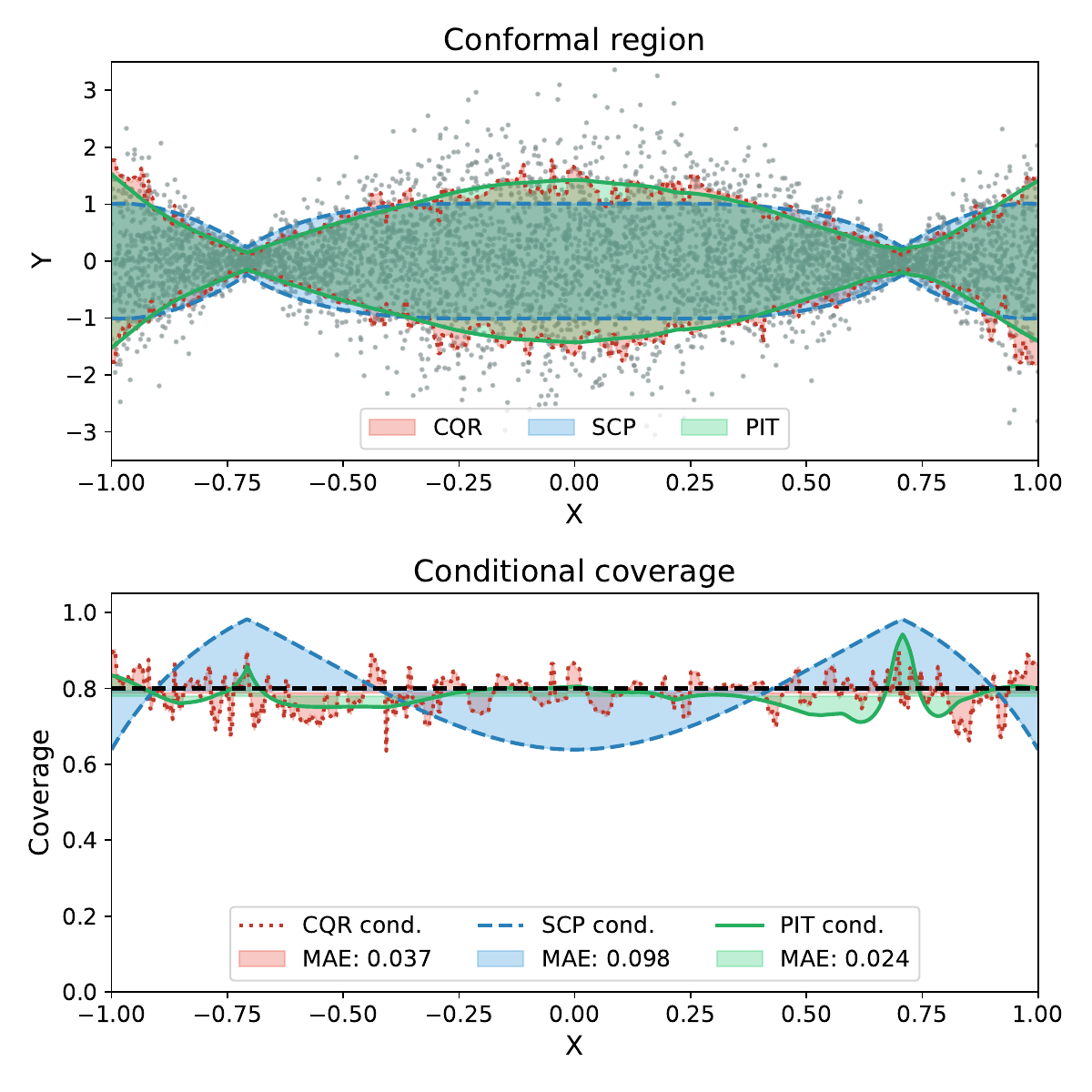}
        \caption{$s(x, y) = -\log p(y \mid x), \, \alpha = 0.2$}
    \end{subfigure}
    \begin{subfigure}{0.32\textwidth}
        \centering
        \includegraphics[width=1.02\textwidth, trim=7 7 7 7, clip]{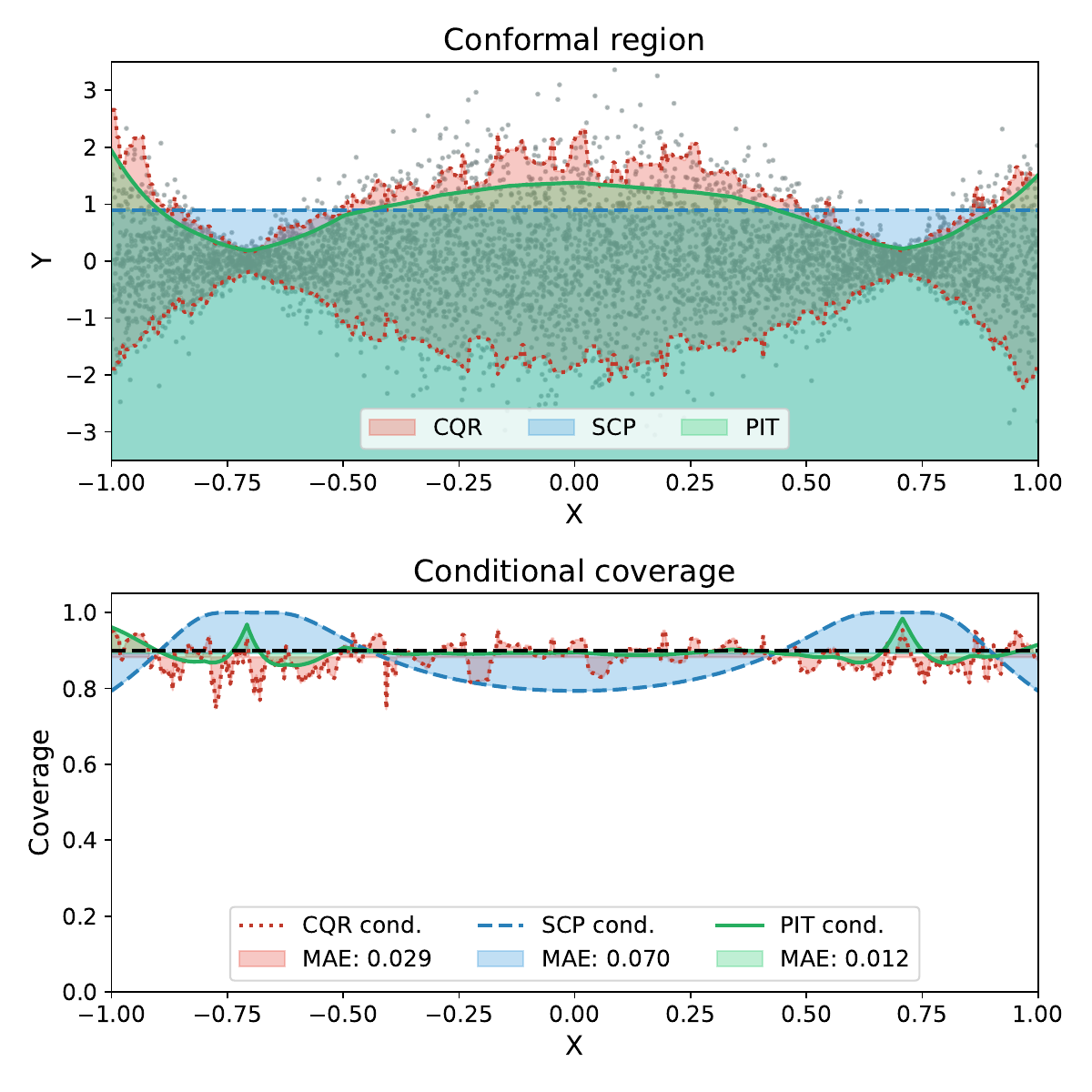}
        \caption{$s(x, y) = y, \, \alpha = 0.1$}
    \end{subfigure}
    \caption{Comparison between base and PIT-corrected conformal prediction regions evaluated across three nonconformity scores and target coverages (black dashed lines). The PIT correction reliably restores approximate conditional coverage and maintains marginal validity as well as geometric shape, independent of the base score. A total of $n_\mathrm{test} = 5000$ points are scattered.}
    \label{fig:guan_simulation}
\end{figure}

\subsubsection{Convergence Study}

We now turn to the empirical convergence of our PIT-CP procedure as the number of training samples $N$ increases. We consider the same data-generating process as in the previous Paragraph~\ref{par:toy}.

Since the distribution is centered, we apply the PIT-CP procedure to the absolute residual base score $s(x, y) = \vert y \vert$ calibrated on $n = 1000$ samples. We evaluate the conditional coverage on a uniform grid of $98$ miscoverage levels $\alpha \in \{ \frac{k}{99} \}_{k=1}^{98}$. The conditional coverage is then compared to the marginal coverage through the $L^1$ conditional coverage gap, namely $\mathbb{E}\left[ \widehat{\Delta}(X) \right]$. 

Using the same notation as before, the evaluation metric for a fixed conformal prediction region function is approximated, assuming $n$ sufficiently large, by
\[
\mathbb{E}\left[ \widehat{\Delta}(X) \right] \approx \mathbb{E}\left[ \widehat{\Delta}(X) \mid \mathcal{D}_n \right] \approx \frac{1}{2} \int_{-1}^1 \max_{\alpha \in \{ \frac{k}{99} \}_{k=1}^{98}} \left\vert \hat{c}_{1 - \alpha}(x) - \bar{c}_{1 - \alpha} \right\vert \, dx.
\]

Note that the prediction at various quantiles entails no time penalty as a single probabilistic model encompasses the whole conditional distribution of the score at once, which is one of the most appealing aspects of the PIT-CP procedure. The experiment is then repeated for $n_\mathrm{runs} = 10$ independent runs for varying training sample sizes $N \in \{ 0, 1000, 2000, 3000, 4000, 5000 \}$. Here, $N = 0$ corresponds to the baseline performance without any post-processing correction. We compare the performance of two density estimators presented in Section~\ref{sec:algorithms}: a Gaussian Mixture Model (GMM, representing Algorithm~\ref{alg:pit_cp_mdn}) parameterized with $m = 5$ components (which empirically proved to be a reasonable compromise between expressiveness and optimization stability), and a Sum-of-Squares Polynomial Flow (SOSPF, representing Algorithm~\ref{alg:pit_cp_cnf}). The optimization is carried out following the same procedure as previously described in Paragraph~\ref{par:toy}.

\bigbreak

As depicted in Figure~\ref{fig:convergence}, both methods substantially reduce the $L^1$ error between the exact conditional coverage and the marginal coverage as the number of training samples grows, thus significantly improving conditional validity. While an elbow at $\approx 1000$ samples can be noted, indicating diminishing returns, the $L^1$ conditional gap still continues to improve well beyond this point without encountering a clear bottleneck. The SOSPF model exhibits a clear advantage in this setting with a $7$-fold reduction compared to the uncorrected baseline, which we attribute to its superior flexibility and better optimization quality. This suggests preferring conditional normalizing flows over mixture density networks. However, alternative architectures might yield significantly stronger results but were not explored in this work. Practitioners should therefore compare and tune models to maximize performance for each specific application, for instance using convolution-based networks for specialized vision tasks.

\begin{figure}[H]
    \centering
    \includegraphics[width=0.6\textwidth, trim=7 7 7 7, clip]{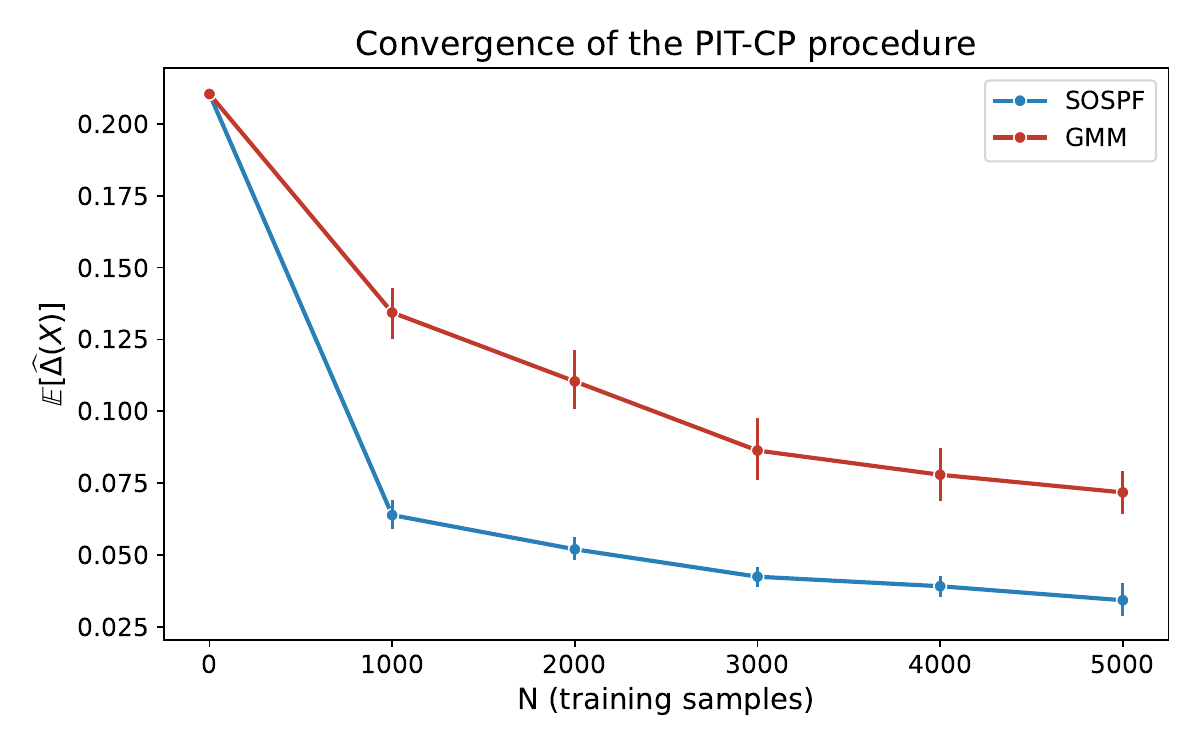}
    \caption{$L^1$ conditional coverage gap as a function of the number of training samples $N$ for GMM and SOSPF estimators. Error bars indicate the standard deviation computed on $n_\mathrm{test} = 5000$ points, over $n_\mathrm{runs} = 10$ repetitions.}
    \label{fig:convergence}
\end{figure}

\subsection{Real-World Data}

We evaluate the proposed methodology on the SARCOS dataset \citep{vijayakumar2000sarcos}, which provides a standard benchmark for multidimensional regression tasks involving robotic arm inverse dynamics. In our case, we aim to predict the joint torques of the $7$ degrees of freedom of a SARCOS anthropomorphic robot arm. The dataset consists of $44484$ total samples. We randomly shuffle the data and extract $4$ distinct parts: $1/3$ for training, $1/3$ for validation, $1/6$ for calibration, and $1/6$ for testing. The feature variable $X \in \mathcal{X}$ represents the input space with dimension $p = 21$, which includes position, velocity, and acceleration information. The outcome variable $Y$ corresponds to the $7$ joint torques, hence the outcome space is a seven-dimensional subset $\mathcal{Y} \subseteq \mathbb{R}^7$. Following the standard practice, both features and targets are normalized prior to model training, and the transformation is inverted during post-processing. Although the initial data were recorded as a time series, we rely on the assumption that the underlying data-generating process is Markovian, \textit{i.e.}, the torques are only affected by the current forces exerted on the arm and are independent of the past trajectory, ensuring independence of the observations.

\subsubsection{Base Score}

To quantify the accuracy of a prediction $\hat{y}$ with respect to a ground-truth measurement $y$, we rely on the $L^\infty$-norm of the standardized residuals. More precisely, as a natural base nonconformity score required by our PIT-CP methodology, we use, given features $x \in \mathcal{X}$
\[
s(x, y) = \left\Vert D^{-1} (y - \hat{f}(x)) \right\Vert_\infty,
\]
where $\hat{f}$ denotes a black-box predictor, and $D$ is a diagonal scaling matrix containing the empirical standard deviations of the residuals on the validation set. More precisely, the base point predictions $\hat{f}(x)$ are obtained using TabPFN \citep{grinsztajn2025tabpfn, hollmann2025tabpfn}, a freely available cutting-edge tabular foundation neural model trained on large amounts of synthetic data which requires no hyperparameter tuning. It is trained on the first split, comprising one third of the dataset, and consistently achieves $R^2$ scores exceeding $99\%$ across all targets. This nonconformity score yields conformal prediction regions consisting of hyperrectangles. Such regions are easily interpretable, being the Cartesian product of univariate intervals, and also equally penalize errors across any dimension.

\bigbreak

To summarize, evaluating this high-dimensional task using the TabPFN predictor and the $L^\infty$ nonconformity score provides a compelling application for our framework for at least three reasons:
\begin{enumerate}[itemsep=0pt]
    \item Coordinate-wise CQR requires a distinct model for each target coverage level $1 - \alpha$ and each dimension, leading to potentially significant training overhead. To ensure simultaneous conditional validity, it also relies on the assumption of conditional independence among the responses given $X$.
    \item Modeling the full conditional distribution in $\mathbb{R}^7$ via HPD-split is statistically demanding and it might be too optimistic to hope to achieve superior results in terms of volume compared to a well-established, state-of-the-art pre-trained predictive model.
    \item Identifying the highest density region directly in $\mathcal{Y}$ may produce sets that lack structural interpretation. Our approach preserves the original hyperrectangular geometry, while still approaching conditional validity.
\end{enumerate}

\subsubsection{Evaluation Metrics}

To empirically evaluate conditional validity, we perform K-means clustering ($K=10$) on the (rescaled) testing features to partition the input space into distinct bins. Given $\alpha \in (0, 1)$ and conformal prediction region function $\widehat{C}_{1 - \alpha}$, the conditional coverage gap is then computed as the peak-to-peak discrepancy in empirical coverage across these clusters $\{ C_k \}_{k=1}^K$. Formally, noting $\left\{ (X_i'', Y_i'') \right\}_{i=1}^{n_\mathrm{test}}$ the testing sample, we compute the gap as
\[
\mathrm{Gap}_{1 - \alpha} \coloneqq \max_{1 \leq k \leq K} \frac{1}{\vert C_k \vert} \sum_{i \in C_k} \mathbb{I}\left\{ Y_i'' \in \widehat{C}_{1 - \alpha}(X_i'') \right\} - \min_{1 \leq k \leq K} \frac{1}{\vert C_k \vert} \sum_{i \in C_k} \mathbb{I}\left\{ Y_i'' \in \widehat{C}_{1 - \alpha}(X_i'') \right\}.
\]

This metric is preferred over the standard worst-slab coverage \citep{cauchois2021knowing} because it penalizes both over- and under-coverage. Other related clustering-based evaluation metrics closer to the $L^1$ error have also been proposed, for instance by \cite{ding2023class}. Moreover, in our setting, assuming the clusters are roughly of equal size, the coverage is computed using approximately $750$ data points, which provides a robust estimate. We also report the first quartile, median, and third quartile of the conformal prediction regions' volume. We omit the average volume as it scales as $\ell^7$, where $\ell$ denotes the side length of a prediction region, and is therefore too sensitive to outliers. Details on the computation of the volumes for the HPD-split and CONTRA algorithms are provided in Appendix~\ref{subsec:hpd} and \ref{subsec:contra}, respectively.

\subsubsection{Results}

We compare our proposed PIT-CP against several baselines: SCP (with base score $s$), Coordinate-wise CQR calibrated using the $L^\infty$-norm of the component-wise nonconformity scores, HPD-split, and CONTRA. Our PIT-CP relies on a Sum-of-Squares Polynomial Flow, HPD-split and CONTRA share the same Masked Autoregressive Flow architecture, while CQR is built upon CatBoost quantile regressors to form hyperrectangles. With the exception of the PIT-CP procedure, all models are trained on the combined training and validation splits, as PIT-CP relies exclusively on the validation split to avoid data leakage from the TabPFN training phase, with no hyperparameter tuning in all cases. The training of the neural parametric models is achieved with a batch size of $1024$ samples, using the Adam optimizer with a learning rate of $10^{-3}$ for over $20 000$ optimization steps, taking approximately $2$ minutes per model. The experiment is repeated for $n_\mathrm{runs} = 10$ independent runs, where the first third of the data used by the TabPFN predictor remains unshuffled while the remainder is randomly shuffled to ensure the statistical significance of our results.

\bigbreak

As shown in Table~\ref{tab:real_world} below, all methods remain approximately conditionally valid, meaning their coverage gaps are much smaller than that of SCP. However, density-based methods (HPD, CONTRA) and our PIT-CP approach achieve the lowest conditional coverage gaps overall with no statistically significant difference between them, and clearly outperform SCP and CQR. In hindsight, this similarity is expected as all these procedures depend, whether implicitly or explicitly, on distributional invariance to achieve approximate conditional validity. In particular, our PIT-CP procedure achieves approximate conditional validity while retaining the interpretable hyperrectangular geometry induced by the $L^\infty$ base score, and also producing prediction regions vastly tighter in volume than CQR, HPD-split, and CONTRA. Counterintuitively, minimizing the volume of conformal prediction regions in a fully distributional setting requires prediction scores that are not only well-calibrated on average over the entire conditional distribution, but also sufficiently tight. In our experiments, however, even a simple MLP-based regression score was unable to produce prediction sets more efficient than those obtained using the base score, suggesting that conditional validity in fully distributional approaches such as HPD-split does not necessarily imply good prediction volumes. We do not claim, however, that these methods could not become more efficient through optimization of the neural architectures, hyperparameter tuning, or selection of more suitable normalizing flows. Nonetheless, these improvements would require additional development time (which was not available to us) and careful empirical optimization which are not needed by our post-processing correction. 

\begin{table}[H]
    \centering
    \renewcommand{\arraystretch}{0.9}
    \begin{tabular}{clcccc}
        \toprule
        $1 - \alpha$ &Method &Gap &Vol. $1^\mathrm{st}$ quartile &Vol. Median &Vol. $3^\mathrm{rd}$ quartile \\
        \midrule
        \multirow{5}{*}{0.6} 
        &SCP &$0.53 \pm 0.03$& $\underline{3.64 \pm 0.23}$& $\underline{3.64 \pm 0.23}$& $\boldsymbol{3.64 \pm 0.23}$ \\
        &CQR &$\underline{0.27 \pm 0.02}$& $200.81 \pm 10.78$& $408.85 \pm 19.18$& $1009.65 \pm 48.90$ \\
        &HPD-split &$\boldsymbol{0.10 \pm 0.03}$ &$14.49 \pm 1.67$ &$42.54 \pm 3.88$ &$172.02 \pm 22.14$ \\
        &CONTRA &$\boldsymbol{0.10 \pm 0.03}$ &$14.67 \pm 1.72$ &$43.02 \pm 3.68$ &$176.23 \pm 22.89$ \\
        &PIT-CP &$\boldsymbol{0.09 \pm 0.04}$& $\boldsymbol{0.28 \pm 0.02}$& $\boldsymbol{2.88 \pm 0.15}$& $\underline{27.23 \pm 3.06}$ \\
        \midrule
        \multirow{5}{*}{0.7}
        &SCP &$0.47 \pm 0.03$& $\underline{12.91 \pm 0.80}$& $\underline{12.91 \pm 0.80}$& $\boldsymbol{12.91 \pm 0.80}$ \\
        &CQR &$\underline{0.24 \pm 0.02}$& $421.18 \pm 21.50$& $936.87 \pm 38.98$& $2478.09 \pm 101.30$ \\
        &HPD-split &$\boldsymbol{0.09 \pm 0.02}$ &$27.37 \pm 3.11$ &$80.28 \pm 6.75$ &$325.26 \pm 39.64$ \\
        &CONTRA &$\boldsymbol{0.09 \pm 0.02}$ &$27.82 \pm 3.49$ &$81.74 \pm 7.71$ &$335.21 \pm 47.20$ \\
        &PIT-CP &$\boldsymbol{0.08 \pm 0.02}$& $\boldsymbol{0.64 \pm 0.06}$& $\boldsymbol{6.86 \pm 0.36}$& $\underline{66.80 \pm 6.56}$ \\
        \midrule
        \multirow{5}{*}{0.8}
        &SCP &$0.40 \pm 0.04$& $\underline{53.26 \pm 3.53}$& $\underline{53.26 \pm 3.53}$& $\boldsymbol{53.26 \pm 3.53}$ \\
        &CQR &$\underline{0.19 \pm 0.03}$& $1158.22 \pm 49.26$& $2704.44 \pm 109.33$& $7339.42 \pm 318.24$ \\
        &HPD-split &$\boldsymbol{0.08 \pm 0.02}$ &$58.57 \pm 7.40$ &$171.76 \pm 15.40$ &$694.46 \pm 76.89$ \\
        &CONTRA &$\boldsymbol{0.08 \pm 0.02}$ &$59.80 \pm 8.13$ &$175.20 \pm 15.68$ &$719.90 \pm 89.24$ \\
        &PIT-CP &$\boldsymbol{0.08 \pm 0.02}$& $\boldsymbol{1.76 \pm 0.19}$& $\boldsymbol{19.47 \pm 0.98}$& $\underline{195.73 \pm 17.90}$ \\
        \midrule
        \multirow{5}{*}{0.9}
        &SCP &$0.25 \pm 0.03$& $390.53 \pm 27.32$& $\underline{390.53 \pm 27.32}$& $\boldsymbol{390.53 \pm 27.32}$ \\
        &CQR &$\underline{0.12 \pm 0.02}$& $6830.21 \pm 428.09$& $15989.44 \pm 891.05$& $40938.43 \pm 2037.48$ \\
        &HPD-split &$\boldsymbol{0.06 \pm 0.02}$ &$\underline{171.87 \pm 26.41}$ &$503.58 \pm 52.89$ &$2056.48 \pm 223.79$ \\
        &CONTRA &$\boldsymbol{0.06 \pm 0.02}$ &$175.74 \pm 28.06$ &$515.41 \pm 53.74$ &$2126.31 \pm 295.68$ \\
        &PIT-CP &$\boldsymbol{0.06 \pm 0.01}$& $\boldsymbol{7.67 \pm 0.76}$& $\boldsymbol{88.67 \pm 5.41}$& $\underline{899.05 \pm 70.19}$ \\
        \bottomrule
    \end{tabular}
    \caption{Conditional coverage gap and region volumes across selected quantiles, including standard errors. Top-performing results are highlighted in \textbf{bold}, while second-best results are \underline{underlined} (indicating a separation $> 1 \, \mathrm{sd}$).}
    \label{tab:real_world}
\end{table}

\section*{Conclusion and Perspectives}

In this work, we have established a unified theoretical framework characterizing the conditional coverage gap in conformal prediction. We have shown that achieving exact conditional validity is equivalent to constructing nonconformity scores that are distributionally invariant with respect to the features, and interpreted the well-known finite-sample impossibility results through this lens. In addition, we introduced the PIT-CP procedure, providing flexible post-processing algorithms aiming to restore approximate conditional validity with explicit oracle bounds and convergence rates.

A limitation of this work is that our empirical evaluation depends on the particular baseline implementations we selected. Stronger baselines may have been possible with alternative implementations, more extensive hyperparameter tuning, or explicit incorporation of structural prior knowledge.

Future research directions include extending this post-processing perspective to handle atomic distributions and exploring the link with multivariate scores used in optimal transport conformal prediction, as well as potential applications to fairness and causality.

\section*{Acknowledgements}

The author wishes to thank Pierre Humbert for numerous valuable comments and remarks regarding this work.

\bibliographystyle{abbrvnat}
\bibliography{sources}

\newpage

\appendix

\section{Proofs}

\subsection{Proof of Theorem~\ref{thm:cond_coverage}} \label{pf:cond_coverage}

\begin{proof}
We first show the implication $\mathrm{(2)} \implies \mathrm{(1)}$. By the definition of the split conformal prediction region, the conditional coverage probability simplifies to the probability of the test score falling below the threshold, and thus for all $n \geq 1$, $\alpha \in (0, 1)$, and $P_X$-almost all $x \in \mathcal{X}$
\[
\mathbb{P}\left( Y_{n + 1} \in \widehat{C}_{1 - \alpha}(x) \mid X_{n + 1} = x \right) = \mathbb{P}(S_{n + 1} \leq \hat{q}_{1 - \alpha} \mid X_{n + 1} = x).
\]

Under the distributional invariance assumption (Definition~\ref{def:pivotal}), the feature $X_{n + 1}$ is statistically independent of its corresponding score $S_{n + 1}$. Furthermore, by Assumption~\ref{ass:iid}, since $\hat{q}_{1 - \alpha}$ is $\mathcal{D}_n$-measurable and $\mathcal{D}_n$ is independent of both $X_{n + 1}$ and $S_{n + 1}$, we get
\[
\mathbb{P}(S_{n + 1} \leq \hat{q}_{1 - \alpha} \mid X_{n + 1} = x) = \mathbb{P}(S_{n + 1} \leq \hat{q}_{1 - \alpha}).
\]

We now prove $\mathrm{(1)} \implies \mathrm{(2)}$. For each $n \geq 1$, choose $\alpha_n = \frac{1}{n + 1}$ so that $k = n$. Therefore, $\hat{q}_{1 - \alpha_n} = \max_{1 \leq i \leq n} S_i \eqqcolon S_{(n)}$. For this choice of $\alpha_n$, $\mathrm{(1)}$ gives
\[
\forall n \geq 1, \, \mathbb{P}\left( S_{n + 1} \leq S_{(n)} \mid X_{n + 1} \right) \overset{\mathrm{a.s.}}{=} \mathbb{P}\left( S_{n + 1} \leq S_{(n)} \right).
\]

Letting the nondecreasing function $Q_S: u \mapsto \inf \left\{ t \in \mathbb{R} : \mathbb{P}(S \leq t) \geq u \right\}$ (\textit{i.e.}, the quantile function of $S$), $U_1, \dots, U_n \overset{\mathrm{i.i.d.}}{\sim} \mathrm{Unif}(0, 1)$ as well as $U_{(n)} \coloneqq \max_{1 \leq i \leq n} U_i$, we have
\[
S_{(n)} \overset{d}{=} Q_S(U_{(n)}).
\]

Therefore, since $U_{(n)}$ has density $nu^{n-1}\mathbb{I}\{ u \in (0, 1) \}$ with respect to the Lebesgue measure, we obtain
\[
\forall n \geq 1, \, \int_0^1 \left[ \mathbb{P}\left( S \leq Q_S(u) \mid X \right) - \mathbb{P}\left( S \leq Q_S(u) \right) \right] n u^{n-1} \, du = 0 \quad P_X \text{-a.s.}
\]

Since this holds for countably many $n$, it also holds $P_X$-almost surely simultaneously for all $n \geq 1$. Also, polynomials are dense in $C\left( [0, 1] \right)$ by Stone--Weierstrass' theorem, which itself is dense in $L^2(0, 1)$, and so it follows that for $P_X \otimes \mathrm{Leb}$-almost all $(x, u) \in \mathcal{X} \times (0, 1)$
\begin{equation} \label{eq:quantile}
\mathbb{P}\left( S \leq Q_S(u) \mid X = x \right) = \mathbb{P}\left( S \leq Q_S(u) \right).
\end{equation}

It remains to pass from equality at marginal quantiles to equality at arbitrary thresholds. Fix $x \in \mathcal{X}$ such that Equation~\eqref{eq:quantile} holds for almost all $u \in (0, 1)$, and let $t \in \mathbb{R}$. If $u < F_S(t)$, then by the definition of the generalized inverse, $Q_S(u) \leq t$, and thus
\[
F_{S \mid X = x}(t) \geq F_{S \mid X = x}\left( Q_S(u) \right) = F_S\left( Q_S(u) \right)
\geq u.
\]
Letting $u \to F_S(t)$ yields
\[
F_{S \mid X = x}(t) \geq F_S(t).
\]

Moreover, we have
\[
\mathbb{E}\left[ F_{S \mid X}(t) - F_S(t) \right] = 0,
\]
hence $F_{S \mid X = x}(t) = F_S(t)$ for $P_X$-almost all $x \in \mathcal{X}$ and every fixed $t$. By considering the countable (yet dense) set of rational thresholds $t \in \mathbb{Q}$ so that the above equality holds simultaneously for all such $t$, and using the right-continuity of distribution functions, we conclude that
\[
F_{S\mid X = x} = F_S
\]
for $P_X$-almost all $x \in \mathcal{X}$, that is, $S$ and $X$ are independent.
\end{proof}

\subsection{Proof of Lemma~\ref{lem:ks_bound}} \label{pf:ks_bound}

\begin{proof}
Fix $\alpha \in (0, 1)$. By the definition of the split conformal prediction region and Assumption~\ref{ass:iid}, the empirical threshold $\hat{q}_{1 - \alpha}$ is a random variable independent of both $X_{n + 1}$ and $S_{n + 1}$. Following the logic in Theorem~\ref{thm:cond_coverage}, we can express both the conditional and marginal coverage probabilities as expectations over the distribution of the threshold. Specifically, we have
\begin{align*}
&\mathbb{P}\left( Y_{n + 1} \in \widehat{C}_{1 - \alpha}(x) \mid X_{n + 1} = x \right) = \mathbb{P}(S_{n + 1} \leq \hat{q}_{1 - \alpha} \mid X_{n + 1} = x) = \mathbb{E}\left[ F_{S \mid X = x}(\hat{q}_{1 - \alpha}) \right], \\
&\mathbb{P}\left( Y_{n + 1} \in \widehat{C}_{1 - \alpha}(X_{n + 1}) \right) = \mathbb{P}(S_{n + 1} \leq \hat{q}_{1 - \alpha}) = \mathbb{E}\left[ F_S(\hat{q}_{1 - \alpha}) \right],
\end{align*}
where the expectation is taken over the distribution of $\mathcal{D}_n$, which in turn governs that of $\hat{q}_{1 - \alpha}$. 

Substituting these expressions into the definition of $\Delta(x)$, we obtain
\begin{align*}
\Delta(x) &= \sup_{\alpha \in (0, 1)} \left\vert \mathbb{E}\left[ F_{S \mid X = x}(\hat{q}_{1 - \alpha}) - F_S(\hat{q}_{1 - \alpha}) \right] \right\vert \\
&\leq \mathbb{E}\left[ \sup_{\alpha \in (0, 1)} \left\vert F_{S \mid X = x}(\hat{q}_{1 - \alpha}) - F_S(\hat{q}_{1 - \alpha}) \right\vert \right] \\
&\leq \mathbb{E}\left[ \sup_{q \in \mathbb{R}} \left\vert F_{S \mid X = x}(q) - F_S(q) \right\vert \right] \\
&= d_{KS}(F_{S \mid X = x}, F_S).
\end{align*}
\end{proof}

\subsection{Proof of Proposition~\ref{prop:pit_invariance}} \label{pf:pit_invariance}

\begin{proof}
We evaluate the transformed score on a true random observation $Y \sim P_{Y \mid X = x}$. Let $S_x = s(x, Y)$ be the random base score, and let $F_{S_x}$ be its true cumulative distribution function. 

By Definition~\ref{def:pit_correction}, the transformed score is exactly $\tilde{s}(x, Y) = F_{S_x}(S_x)$. By the Probability Integral Transform (Theorem~1 of \cite{angus1994probability}), under Assumption~\ref{ass:cont_cond}, applying a continuous PIT to its own underlying random variable yields a standard uniform distribution. 

Thus, conditionally on $X = x$, the score $\widetilde{S} \mid X = x \sim \mathrm{Unif}(0, 1)$ for all $x \in \mathcal{X}$. Because the conditional distribution of $\widetilde{S}$ is identical for every realization of $X$, the random variable $\widetilde{S}$ is independent of $X$, satisfying Definition~\ref{def:pivotal}.
\end{proof}

\subsection{Proof of Lemma~\ref{lem:plugin_upper}} \label{pf:plugin_upper}

\begin{proof}
Fix $x \in \mathcal{X}$ such that $F_{\widehat{S} \mid X = x}$ is continuous, following Assumption~\ref{ass:plugin_cont_cond}. For the first inequality~\eqref{eq:ks_bound}, by the triangle inequality, letting $U$ be the CDF of a $\mathrm{Unif}(0, 1)$ random variable, we use Lemma~\ref{lem:ks_bound} and
\begin{equation} \label{eq:triangle}
d_{KS}(F_{\widehat{S} \mid X = x}, F_{\widehat{S}}) \leq d_{KS}(F_{\widehat{S} \mid X = x}, U) + d_{KS}(U, F_{\widehat{S}}).
\end{equation}

For the first term of Equation~\eqref{eq:triangle}, since $\widehat{S}$ takes values in $(0, 1)$, one gets
\begin{align*}
d_{KS}(F_{\widehat{S} \mid X = x}, U) &= \sup_{t \in \mathbb{R}}\left\vert F_{\widehat{S} \mid X = x}(t) - U(t) \right\vert \\
&= \sup_{t \in (0, 1)}\left\vert F_{\widehat{S} \mid X = x}(t) - U(t) \right\vert \\
&= \sup_{t \in (0, 1)}\left\vert \mathbb{P}\left( \widehat{F}_{S \mid X = x}(S) \leq t \mid X = x \right) - t \right\vert.
\end{align*}

Because $\widehat{F}_{S \mid X = x}$ is continuous, it satisfies the intermediate value property and its image covers the whole interval $(0, 1)$. Substituting $t = \widehat{F}_{S \mid X = x}(y)$ for $y \in \mathbb{R}$, by a change of variables, we get
\[
d_{KS}(F_{\widehat{S} \mid X = x}, U) = \sup_{y \in \mathbb{R}}\left\vert \mathbb{P}\left( \widehat{F}_{S \mid X = x}(S) \leq \widehat{F}_{S \mid X = x}(y) \mid X = x \right) - \widehat{F}_{S \mid X = x}(y) \right\vert.
\]

Now, for any fixed $y \in \mathbb{R}$ let us define
\[
y^* = \sup\left\{ u \in \mathbb{R} : \widehat{F}_{S \mid X = x}(u) = \widehat{F}_{S \mid X = x}(y) \right\}.
\]

Then, since $\widehat{F}_{S \mid X = x}$ is nondecreasing, the first term evaluates to
\[
\mathbb{P}\left( \widehat{F}_{S \mid X = x}(S) \leq \widehat{F}_{S \mid X = x}(y) \mid X = x \right) = \mathbb{P}( S \leq y^* \mid X = x) = F_{S \mid X = x}(y^*),
\]
and by definition we also have
\[
\widehat{F}_{S \mid X = x}(y) = \widehat{F}_{S \mid X = x}(y^*).
\]

Thus, we recover
\begin{equation} \label{eq:ks_uniform}
d_{KS}(F_{\widehat{S} \mid X = x}, U) \leq \sup_{y \in \mathbb{R}}\left\vert F_{S \mid X = x}(y) - \widehat{F}_{S \mid X = x}(y) \right\vert = d_{KS}\left( F_{S \mid X = x}, \widehat{F}_{S \mid X = x} \right).
\end{equation}

For the second term of Equation~\eqref{eq:triangle}, simply notice that for any $t \in \mathbb{R}$, by the law of total expectation, we have
\[
F_{\widehat{S}}(t) = \mathbb{E}\left[ F_{\widehat{S} \mid X}(t) \right],
\]
thus, leveraging our previous Equation~\eqref{eq:ks_uniform}
\begin{align*}
d_{KS}(U, F_{\widehat{S}}) &= \sup_{t \in (0, 1)}\left\vert \mathbb{E}\left[ t - F_{\widehat{S} \mid X}(t) \right] \right\vert \\
&\leq \mathbb{E}\left[ \sup_{t \in (0, 1)}\left\vert t - F_{\widehat{S} \mid X}(t) \right\vert \right] \\
&\leq \mathbb{E}\left[ d_{KS}\left( F_{S \mid X}, \widehat{F}_{S \mid X} \right) \right].
\end{align*}

\bigbreak

To obtain the inequality~\eqref{eq:kl_bound}, because the Kolmogorov--Smirnov distance is upper-bounded by the total variation distance, we sequentially apply Pinsker's inequality, Jensen's inequality, as well as the fact that $(a + b)^2 \leq 2(a^2 + b^2)$ for any real numbers $a, b$, and thus
\begin{align*}
d_{KS}(F_{\widehat{S} \mid X = x}, F_{\widehat{S}}) &\leq d_{TV}\left( P_{S \mid X = x}, \widehat{P}_{S \mid X = x} \right) + \mathbb{E}\left[ d_{TV}\left( P_{S \mid X}, \widehat{P}_{S \mid X} \right) \right] \\
&\leq \sqrt{\frac{1}{2} D_{KL}\left( P_{S \mid X = x} \parallel \widehat{P}_{S \mid X = x} \right)} + \mathbb{E}\left[ \sqrt{\frac{1}{2} D_{KL}\left( P_{S \mid X} \parallel \widehat{P}_{S \mid X} \right)} \right] \\
&\leq \sqrt{\frac{1}{2} D_{KL}\left( P_{S \mid X = x} \parallel \widehat{P}_{S \mid X = x} \right)} + \sqrt{\frac{1}{2} \mathbb{E}\left[ D_{KL}\left( P_{S \mid X} \parallel \widehat{P}_{S \mid X} \right) \right]} \\
&\leq \sqrt{2\left( \frac{1}{2} D_{KL}\left( P_{S \mid X = x} \parallel \widehat{P}_{S \mid X = x} \right) + \frac{1}{2} \mathbb{E}\left[ D_{KL}\left( P_{S \mid X} \parallel \widehat{P}_{S \mid X} \right) \right] \right)}.
\end{align*}
\end{proof}

\subsection{Proof of Theorem~\ref{thm:plugin_bounds}} \label{pf:plugin_bounds}

\begin{proof}
To prove the bounds, simply notice that since Assumption~\ref{ass:iid} holds, then one can bound, for all $x \in \mathcal{X}$
\[
\widehat{\Delta}(x) \leq d_{KS}(F_{\widehat{S} \mid X = x}, F_{\widehat{S}}).
\]

Therefore, applying Equations~\eqref{eq:ks_bound} or \eqref{eq:kl_bound} from Lemma~\ref{lem:plugin_upper} under the same assumptions yields the stated results.
\end{proof}

\subsection{Proof of Corollary~\ref{cor:exp_prob_bounds}} \label{pf:exp_prob_bounds}

\begin{proof}
Because the coverage gap is nonnegative and $\delta > 0$, applying Markov's inequality yields
\[
\mathbb{P}\left( \widehat{\Delta}(X) \geq \delta \right) \leq \frac{\mathbb{E}\left[ \widehat{\Delta}(X) \right]}{\delta}, \qquad \mathbb{P}\left( \widehat{\Delta}(X) \geq \delta \right) \leq \frac{\mathbb{E}\left[ \widehat{\Delta}(X)^2 \right]}{\delta^2}.
\]

Invoking the bounds of Theorem~\ref{thm:plugin_bounds} under the same assumptions directly yields the stated results, both in expected value and high probability.
\end{proof}

\subsection{Proof of Lemma~\ref{lem:oracle_inclusion}} \label{pf:oracle_inclusion}

\begin{proof}
Fix $x \in \mathcal{X}$. By the triangle inequality, for any realization of the threshold $\hat{q}_{1 - \alpha} < +\infty$
\[
\left\vert F_{\widehat{S} \mid X = x}(\hat{q}_{1 - \alpha}) - (1 - \alpha) \right\vert \leq \left\vert F_{\widehat{S} \mid X = x}(\hat{q}_{1 - \alpha}) - F_{\widehat{S}}(\hat{q}_{1 - \alpha}) \right\vert + \left\vert \widehat{F}_n(\hat{q}_{1 - \alpha}) - (1 - \alpha) \right\vert + \left\vert \widehat{F}_n(\hat{q}_{1 - \alpha}) - F_{\widehat{S}}(\hat{q}_{1 - \alpha}) \right\vert.
\]

The first term is bounded by the Kolmogorov--Smirnov distance $d_{KS}(F_{\widehat{S} \mid X = x}, F_{\widehat{S}})$. For the second term, let $\widehat{F}_n$ denote the empirical cumulative distribution function of the $n$ calibration scores. By the definition of the split conformal threshold, and since ties in $\{ \widehat{S}_i \}_{i=1}^n$ occur with zero probability, almost surely we have
\begin{align*}
\left\vert \widehat{F}_n(\hat{q}_{1 - \alpha}) - (1 - \alpha) \right\vert &= \left\vert \frac{1}{n} \left\lceil (n + 1)(1 - \alpha) \right\rceil - (1 - \alpha) \right\vert \\
&\leq \left\vert \frac{(n + 1)(1 - \alpha)}{n} - (1 - \alpha) \right\vert + \frac{1}{n} \\
&= \frac{1 - \alpha}{n} + \frac{1}{n} \leq \frac{2}{n}.
\end{align*}

For the third term, by the Dvoretzky--Kiefer--Wolfowitz inequality \citep{massart1990tight}, with probability at least $1 - \delta$ over $\mathcal{D}_n$, the empirical cumulative distribution function satisfies, regardless of $x \in \mathcal{X}$
\[
\sup_{t \in \mathbb{R}} \left\vert \widehat{F}_n(t) - F_{\widehat{S}}(t) \right\vert \leq \sqrt{\frac{\log(2/\delta)}{2n}}.
\]

Combining these guarantees gives a high probability bound on the realized conditional coverage
\[
1 - \alpha - \widehat{L}(x, n, \delta) \leq F_{\widehat{S} \mid X = x}(\hat{q}_{1 - \alpha}) \leq 1 - \alpha + \widehat{L}(x, n, \delta),
\]
which clearly also holds for $\hat{q}_{1 - \alpha} = +\infty$ because it implies that $\alpha < \frac{1}{n + 1}$, and thus $\widehat{L}(x, n, \delta) \geq \frac{2}{n} > \alpha$.

Under Assumption~\ref{ass:cont_cond}, the oracle family $\{ C^*_{1 - \beta}(x) \}_{\beta \in (0, 1)}$ is the family of sublevel sets of $s(x, \cdot)$ indexed by their conditional coverage level. Since $\widehat{C}_{1 - \alpha}(x)$ is also a sublevel set of $s(x, \cdot)$, it can be written exactly as $\{ y \in \mathcal{Y} \mid s(x, y) \leq \hat{t}_{1 - \alpha} \}$ for some effective base threshold $\hat{t}_{1 - \alpha} \in \mathbb{R} \cup \{ +\infty \}$. Its conditional coverage $F_{S \mid X = x}(\hat{t}_{1 - \alpha}) = F_{\widehat{S} \mid X = x}(\hat{q}_{1 - \alpha})$ lying between
\[
1 - \alpha - \widehat{L}(x, n, \delta) \qquad \text{and} \qquad 1 - \alpha + \widehat{L}(x, n, \delta)
\]
implies that, under Assumption~\ref{ass:increasing}, the strictly increasing conditional cumulative distribution function admits a well-defined inverse. Applying this inverse to the coverage bounds restricts the effective threshold to
\[
q^*_{1 - \alpha - \widehat{L}(x, n, \delta)}(x) \leq \hat{t}_{1 - \alpha} \leq q^*_{1 - \alpha + \widehat{L}(x, n, \delta)}(x),
\]
which directly establishes the geometric set inclusions
\[
C^*_{1 - \alpha - \widehat{L}(x, n, \delta)}(x) \subseteq \widehat{C}_{1 - \alpha}(x) \subseteq C^*_{1 - \alpha + \widehat{L}(x, n, \delta)}(x).
\]

Since the high probability argument is made irrespective of $x \in \mathcal{X}$, this result holds with high probability uniformly in $x$.
\end{proof}

\subsection{Proof of Theorem~\ref{thm:sym_diff}} \label{pf:sym_diff}

\begin{proof}
Fix $\delta \in (0, 1)$. By Lemma~\ref{lem:oracle_inclusion}, with probability at least $1 - \delta$ over $\mathcal{D}_n$, for all $x \in \mathcal{X}$
\[
C^*_{1 - \alpha - \widehat{L}(x, n, \delta)}(x) \subseteq \widehat{C}_{1 - \alpha}(x) \subseteq C^*_{1 - \alpha + \widehat{L}(x, n, \delta)}(x).
\]

Since the oracle family $\{ C^*_{1 - \beta}(x) \}_{\beta \in (0, 1)}$ consists of nested sublevel sets of $s(x, \cdot)$, we obtain
\[
\widehat{C}_{1 - \alpha}(x) \symdif C^*_{1 - \alpha}(x) \subseteq C^*_{1 - \alpha + \widehat{L}(x, n, \delta)}(x) \setminus C^*_{1 - \alpha - \widehat{L}(x, n, \delta)}(x).
\]

By definition of the oracle conditional quantiles and Assumption~\ref{ass:cont_cond}, the conditional probability of the right-hand side is bounded by the level gap, yielding
\[
\mathbb{P}\left( Y \in C^*_{1 - \alpha + \widehat{L}(x, n, \delta)}(x) \setminus C^*_{1 - \alpha - \widehat{L}(x, n, \delta)}(x) \mid X = x \right) \leq 2\widehat{L}(x, n, \delta).
\]

Therefore, combining the results up to now, with probability at least $1 - \delta$ over $\mathcal{D}_n$
\begin{equation}
\text{for } P_X\text{-almost all } x \in \mathcal{X}, \, \mathbb{P}\left( Y \in \widehat{C}_{1 - \alpha}(x) \symdif C^*_{1 - \alpha}(x) \mid \mathcal{D}_n, X = x \right) \leq 2\widehat{L}(x, n, \delta). \label{eq:upper}
\end{equation}

We then conclude by taking the expectation over $X$ and $\mathcal{D}_n$ on both sides. Indeed, noting $\Omega_\delta$ the event on which \eqref{eq:upper} is satisfied, we have
\begin{align*}
\mathbb{P}\left( Y \in \widehat{C}_{1 - \alpha}(X) \symdif C^*_{1 - \alpha}(X) \right) &\leq 2\mathbb{E}\left[ \mathbb{I}\{ \Omega_\delta \} \widehat{L}(X, n, \delta) \right] + \mathbb{P}(\Omega_\delta^c) \\
&\leq 2\mathbb{E}\left[ \widehat{L}(X, n, \delta) \right] + \delta \\
&\leq \underbrace{2\mathbb{E}\left[ d_{KS}\left( F_{\widehat{S} \mid X}, F_{\widehat{S}} \right) \right]}_{\text{estimation error}} + \underbrace{\sqrt{\frac{2\log(2/\delta)}{n}} + \frac{4}{n} + \delta}_{\text{statistical error}}.
\end{align*}

Taking the \textit{infimum} over $\delta$ essentially concludes the proof, and in particular, choosing $\delta \asymp n^{-1/2}$ achieves the stated rate. Thanks to Assumption~\ref{ass:plugin_cont_cond}, the term $\mathbb{E}\left[ d_{KS}\left( F_{\widehat{S} \mid X}, F_{\widehat{S}} \right) \right]$ can then be further bounded through Lemma~\ref{lem:plugin_upper}, and using $2\sqrt{2} \leq 3$ together with Jensen's inequality.
\end{proof}

\subsection{Proof of Theorem~\ref{thm:minimax_high_prob_bound}} \label{pf:minimax_high_prob_bound}

\begin{proof}
Theorem~3.5 of \cite{li2022minimax}, whose conditions are met in our setting, ensures the existence of an estimator $\hat{p}$ depending on the $N$ i.i.d. samples $\mathcal{D}_N'$ satisfying
\[
\mathbb{E}\left[ \iint \left\vert \hat{p}(s \mid x) - p(s \mid x) \right\vert p_X(x) \, dx \, ds \right] \lesssim N^{-\frac{1}{1/\beta + p/\gamma + 2}},
\]
where the expectation is taken over the distribution of $\mathcal{D}_N'$. 

Since the Kolmogorov--Smirnov distance is bounded by total variation, and together with Corollary~\ref{cor:exp_prob_bounds}, for every $\delta>0$, if $\widehat{F}_{S \mid X}$ is constructed using the estimator $\hat{p}$ through $\widehat{F}_{S \mid X = x}(s) \coloneqq \int_{t \leq s} \hat{p}(t \mid x) \, dt$ (which is continuous, and thus satisfies Assumption~\ref{ass:plugin_cont_cond}), for the first claim $(\mathrm{i})$, we have
\[
\mathbb{P}\left( \widehat{\Delta}(X) \geq \delta \right) \leq \frac{2\mathbb{E}\left[ d_{KS}\left( F_{S \mid X}, \widehat{F}_{S \mid X} \right) \right]}{\delta} \leq \frac{2\mathbb{E}_{\mathcal{D}_N'}\left[ \mathbb{E}_X\left[ d_{TV}\left( P_{S \mid X}, \widehat{P}_{S \mid X} \right) \mid \mathcal{D}_N'\right] \right]}{\delta} \lesssim \frac{N^{-\frac{1}{1/\beta+p/\gamma+2}}}{\delta},
\]
hence
\[
\widehat{\Delta}(X) = O_P\left( N^{-\frac{1}{1/\beta+p/\gamma+2}} \right).
\]

The second claim for $(\mathrm{ii})$ follows similarly from Theorem~\ref{thm:sym_diff} under Assumption~\ref{ass:increasing}.
\end{proof}

\subsection{Proof of Lemma~\ref{lem:circ_invariance}} \label{pf:circ_invariance}

\begin{proof}
Let $k = \lceil (n + 1) (1 - \alpha) \rceil$. First, if $k = n + 1$, then the stated result clearly holds regardless of the nonconformity score. Otherwise, let $S_i \coloneqq s(X_i, Y_i)$ for $i \in \llbracket n \rrbracket$, and define
\[
\widetilde{S}_i \coloneqq g(S_i).
\]

Since $g$ is strictly increasing, it preserves the ordering of the calibration scores, and therefore the order statistics of the transformed scores satisfy
\[
\widetilde{S}_{(k)} = g(S_{(k)}),
\]
where $S_{(k)}$ denotes the $k$-th order statistic of $\{ S_i \}_{i=1}^n$ and $\widetilde{S}_{(k)}$ denotes the $k$-th order statistic of $\{ \widetilde{S}_i \}_{i=1}^n$.

The conformal prediction region obtained from the transformed score $g \circ s$ is then
\[
\widetilde{C}_{1 - \alpha}(x) = \left\{ y \in \mathcal{Y} : g(s(x,y)) \leq \tilde{S}_{(k)} \right\}.
\]

Using $\tilde{S}_{(k)} = g(S_{(k)})$ and the fact that $g$ is strictly increasing, we get
\[
g(s(x,y)) \leq g(S_{(k)}) \quad \Longleftrightarrow \quad s(x,y) \leq S_{(k)},
\]
hence
\[
\widetilde{C}_{1 - \alpha}(x) = \left\{ y \in \mathcal{Y} : s(x,y) \leq S_{(k)} \right\} = \widehat{C}_{1 - \alpha}(x).
\]
\end{proof}

\section{Details on Volume Computation}

\subsection{HPD-split} \label{subsec:hpd}

For a fixed $x \in \mathcal{X}$, the HPD-split conformal prediction region can be expressed as
\[
\widehat{C}_\mathrm{HPD}(x) = \left\{ y \in \mathcal{Y} \mid \hat{p}(y \mid x) \geq \hat{\tau}(x) \right\},
\]
where $\hat{p}(\cdot \mid x)$ is the estimated conditional density and $\hat{\tau}(x) > 0$ is the calibrated threshold.

The volume of this region is the integral of its indicator function over $\mathcal{Y}$
\[
\mathrm{Vol}\left( \widehat{C}_\mathrm{HPD}(x) \right) = \int_\mathcal{Y} \mathbb{I}\{ \hat{p}(y \mid x) \geq \hat{\tau}(x) \} \, dy.
\]

Multiplying and dividing the integrand by $\hat{p}(y \mid x)$ allows us to rewrite this integral as an expected value under the conditional density measure
\[
\mathrm{Vol}\left( \widehat{C}_\mathrm{HPD}(x) \right) = \int_\mathcal{Y} \frac{\mathbb{I}\{ \hat{p}(y \mid x) \geq \hat{\tau}(x) \}}{\hat{p}(y \mid x)} \hat{p}(y \mid x) \, dy = \mathbb{E}_{Y \sim \hat{p}(\cdot \mid x)}\left[ \frac{\mathbb{I}\{ \hat{p}(Y \mid x) \geq \hat{\tau}(x) \}}{\hat{p}(Y \mid x)} \right].
\]

This expected value can then be approximated through Monte Carlo sampling, in practice using $B = 1000$ samples.

\subsection{CONTRA} \label{subsec:contra}

For a fixed $x \in \mathcal{X}$, the CONTRA conformal region leverages a normalizing flow $f(\cdot \mid x) \colon \mathcal{Y} \to \mathcal{Z}$ and can be expressed as
\[
\widehat{C}_\mathrm{CONTRA}(x) = \left\{ y \in \mathbb{R}^d : \Vert \hat{f}(y \mid x) \Vert \leq \hat{r} \right\},
\]
where $\mathcal{Y} = \mathbb{R}^d$ for $d \geq 1$ and $\hat{r}$ is the calibrated radius in the latent space.

Its volume corresponds to
\[
\mathrm{Vol}\left( \widehat{C}_\mathrm{CONTRA}(x) \right) = \int_{\widehat{C}_\mathrm{CONTRA}(x)} 1 \, dy.
\]

By a change of variables, setting $z = \hat{f}(y \mid x)$ with Jacobian $J_{\hat{f}^{-1}}(z \mid x)$, the integration domain maps to the $d$-dimensional Euclidean ball $B_d(0, \hat{r})$
\[
\mathrm{Vol}\left( \widehat{C}_\mathrm{CONTRA}(x) \right) = \int_{B_d(0, \hat{r})} \left\vert \det J_{\hat{f}^{-1}}(z \mid x) \right\vert \, dz.
\]

Letting $V_d(\hat{r})$ denote the volume of $B_d(0, \hat{r})$, we introduce the uniform probability density over the ball to express the integral as an expected value
\[
\mathrm{Vol}\left( \widehat{C}_\mathrm{CONTRA}(x) \right) = V_d(\hat{r}) \int_{B_d(0, \hat{r})} \left\vert \det J_{\hat{f}^{-1}}(z \mid x) \right\vert \, \frac{dz}{V_d(\hat{r})} = V_d(\hat{r}) \mathbb{E}_{Z \sim \mathrm{Unif}\left( B_d(0, \hat{r}) \right)}\left[ \left\vert \det J_{\hat{f}^{-1}}(Z \mid x) \right\vert \right].
\]

This expected value can then be approximated through Monte Carlo sampling, in practice using $B = 1000$ samples.

\end{document}